\documentclass[a4paper,UKenglish]{lipics-v2016}

\usepackage{bussproofs,enumerate,forest}
\usetikzlibrary{patterns}
\usepackage{amsthm,mik-makro,amsmath,graphicx}

\newcommand{\Rall}{G_{\forall}}
\newcommand{\Rone}{G_{1}}
\newcommand{\Rnonzero}{G_{>0}}

\newcommand{\npconp}{{\sc np}~$\cap$~{\normalfont co-}{\sc np}}

%
%

\usepackage{amsthm}
\usepackage{amsmath}
\usepackage{mik-makro}
\usepackage{graphicx}
\usepackage{xspace}
\usepackage[all]{xy}
\usepackage{float}
\usepackage{todonotes}
\usepackage[normalem]{ulem}
\usepackage{tikz}
\usepackage{tikz-qtree}

\usetikzlibrary{trees,automata,positioning}

\newcommand{\intuition}[1]{}

\newcommand{\Qall}{F_{\forall}}

\newcommand{\Qone}{F_{1}}
\newcommand{\Qnonzero}{F_{>0}}
\newcommand{\Qseed}{Q_{\mathrm{seed}}}

\newcommand{\ignore}[1]{}

\newcommand{\zero}{\mathsf {zero}}

%
%

%
%





\begin{document}
	
\bibliographystyle{plainurl}

\title{Emptiness of zero automata is decidable\footnote{The research of M. Bojańczyk is supported by the ERC grant LIPA under the Horizon 2020 framework. H. Gimbert and E. Kelmendi are supported by the French ANR project
    "Stoch-MC" and "LaBEX CPU" of Université de Bordeaux.}}
\titlerunning{Emptiness of zero automata is decidable} 

\author[1]{Mikołaj Bojańczyk}
\author[2]{Hugo Gimbert}
\author[2]{Edon Kelmendi}
\affil[1]{Institute of Informatics, University of Warsaw, Poland\\
  \texttt{bojan@mimuw.edu.pl}}
\affil[2]{LaBRI, Universit{\' e} de Bordeaux, CNRS, France\\
  \texttt{\{hugo.gimbert, edon.kelmendi\}@labri.fr}}

\authorrunning{M. Bojańczyk, H. Gimbert and E. Kelmendi}

\Copyright{Mikołaj Bojańczyk, Hugo Gimbert and Edon Kelmendi}

\subjclass{F.4.3 Formal Languages, F.4.1 Mathematical Logic}
\keywords{tree automata, probabilistic automata, monadic second-order logic}

\EventEditors{John Q. Open and Joan R. Acces}
\EventNoEds{2}
\EventLongTitle{42nd Conference on Very Important Topics (CVIT 2016)}
\EventShortTitle{CVIT 2016}
\EventAcronym{CVIT}
\EventYear{2016}
\EventDate{December 24--27, 2016}
\EventLocation{Little Whinging, United Kingdom}
\EventLogo{}
\SeriesVolume{42}
\ArticleNo{23}

\maketitle
\begin{abstract}
Zero automata are a probabilistic extension of parity automata on infinite trees.
The satisfiability of a certain probabilistic variant of {\sc mso}, called {\sc tmso + zero},
reduces to the emptiness problem for zero automata.
We introduce a variant of zero automata called
nonzero automata. We prove
that for every zero automaton there is an equivalent nonzero automaton of quadratic size
and the emptiness problem of nonzero automata is decidable, with complexity \npconp.
These results imply that {\sc tmso + zero} has decidable satisfiability.
\end{abstract}
\section{Introduction}

In this paper, we prove that emptiness is decidable for two classes of automata, namely \emph{zero} and \emph{nonzero} automata. Zero automata were introduced as a tool for recognizing models of a probabilistic extension of MSO on infinite trees~\cite{thinzero}. Nonzero automata, introduced in this paper, are equivalent to zero automata,
but have simpler semantics.

Both zero and nonzero automata are probabilistic extensions of parity automata on infinite trees.
Here we focus on the case of binary trees.
The automaton performs a random walk on the infinite binary input tree: 
when the automaton is in a state $q$ on a node labelled with $a$,
it selects non-deterministically a transition $(q,a,r_0,r_1)$
and moves with equal probability $\frac{1}{2}$ either to the left node in state $r_0$
or to the right node in state $r_1$. 

The set of branches of the infinite binary tree is equipped with the uniform probability measure,
which is used to define the acceptance condition.
There are two variants of the acceptance condition, one for zero automata and one for nonzero automata
  
 A \emph{nonzero} automaton is equipped with a total order $\leq$ on its set of states $Q$ and three accepting subsets of states $\Qall,\Qone$ and $\Qnonzero$. 
 A run is accepting if: 
 \begin{itemize}
 \item[a)] on every branch the limsup state (i.e.~the maximal state seen infinitely often) is in $\Qall$,
 \item[b)]
with probability $1$ the limsup state is in  $\Qone$,
 \item[c)]
every time the run visits a state in $\Qnonzero$ there is nonzero probability
that all subsequent states are in  $\Qnonzero$. 
\end{itemize}
Condition (a) is the classical parity condition for tree automata and condition (b) is equivalent to the qualitative condition from~\cite{DBLP:journals/tocl/CarayolHS14}. 
Condition (c) seems to be new.
Conditions (a) and (b) are used to define the acceptance condition of zero automata as well,
the difference between zero and nonzero automata lies in condition (c).

The paper~\cite{thinzero}  introduced a variant of {\sc mso} on infinite trees with a probabilistic quantifier, 
called {\sc tmso}+$\zero$,  inspired by probabilistic {\sc mso} from~\cite{DBLP:conf/lfcs/MichalewskiM16}.
In the case where $\zero$ is the unary predicate which checks whether a set of branches has probability $0$,
the contribution of~\cite{thinzero} was a proof that for every formula of this logic one can compute a zero automaton which accepts the same trees. The logic is powerful enough to formulate properties like "every node in the tree has a descendant node labelled with $b$ and the set of branches with infinitely many $b$ has probability $0$".
As argued in~\cite{thinzero}, the motivation for this logic is twofold. First, it  extends various probabilistic logics known in the literature, e.g.~qualitative probabilistic {\sc ctl*}~\cite{LS1982}, or  qualitative probabilistic  {\sc ctl*}  extended with $\omega$-regular path properties~\cite{DBLP:conf/icalp/BrazdilFK08}. Second, the logic, although less general that {\sc mso}, represents a robust class of languages of infinite trees that goes beyond classical {\sc mso}, and thus falls under the scope of the programme of searching for decidable extensions of {\sc mso}.



 The emptiness problem for zero automata was not solved in~\cite{thinzero}, thus leaving open the logic's decidability. A step toward an emptiness algorithm was made in~\cite{DBLP:journals/corr/MichalewskiMB16}, where it was shown that for subzero automata --  the special case of zero automata where only conditions (a) and (b) are used -- one can decide if the recognised language contains a regular tree. In this paper we prove that zero and nonzero automata have decidable emptiness, and therefore also the logic from~\cite{thinzero} has decidable satisfiability.  

The main results of this paper are:
\begin{enumerate}
\item[i)]
For every zero automaton there is an equivalent nonzero automaton of quadratic size.
\item[ii)] 
A nonzero automaton with $\Qall=Q$ is nonempty if and only if its language contains a regular tree of size $|Q|$.
This is decidable in polynomial time.
\item[iii)]
The emptiness problem of nonzero automata is in \npconp.
\end{enumerate}

To prove iii) we provide a reduction of the emptiness problem to the computation of the winner of a parity game
called the \emph{jumping game}.
For that we rely on ii): the states of the jumping game are regular runs of a nonzero automaton where $\Qall=Q$.
According to i) the emptiness problem for zero automata is in \npconp as well.

The plan of the paper is as follows. In Section~\ref{logic_section} we introduce zero and nonzero automata and
state our main result  iii) (Theorem~\ref{thm:emptiness}).
In Section~\ref{sec:zerononzero} we show
i) (Lemma~\ref{lem:same-models}). In Section~\ref{sec:proba} we focus on the special case where $Q=\Qall$ and show ii) (Theorem~\ref{theo:probanp}). In Section~\ref{sec:jumping} we introduce  jumping games
and combine the previous results to provide a proof of iii).
%

\newcommand{\probquant}[2]{\underset{#1}{\mathsf P}(#2)}

\section{Zero and nonzero automata} \label{logic_section}
This section introduces trees and nonzero and zero automata.

\paragraph*{Trees, branches and subtrees.} The automata of this paper describe properties of infinite binary labelled trees. A node in a  tree is a sequence in $\set{0,1}^*$. A \emph{tree} over an alphabet $\Sigma$ is a function $t : \set{0,1}^* \to \Sigma$.  We use standard terminology for trees: node, root, left child, right child,  leaf, ancestor and descendant. A \emph{branch} is a sequence  in $\{0,1\}^\omega$, viewed as an infinite sequence of left or right turns.  A branch \emph{visits} a node if the node is a prefix of the branch.

A \emph{subtree} is a non-empty and ancestor-closed set of nodes.
A subtree is \emph{leaf-free} if each of its nodes has at least one child in the subtree.
A branch of a subtree is a branch which visits only nodes of the subtree.


\paragraph*{Probability measure over branches.} 
We use the \emph{coin-flipping} measure on $\{0,1\}^\omega$:
each bit is chosen independently at random, with $0$ and $1$ having equal probability,
and every Borel subset of $\{0,1\}^\omega$ is measurable.
The probability of a subtree is the probability of the set of branches of the subtree.
%
%
The inner regularity of the coin-flipping measure (see e.g. \cite[Theorem 17.10]{kechris1995classical}) implies:
%
\begin{lemma}\label{lem:approx}
The probability of a measurable set $E$ 
is the supremum of the probabilities
of the subtrees whose every branch belongs to $E$.
\end{lemma}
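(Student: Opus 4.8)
The plan is to derive the statement from the inner regularity recalled just above, by passing between measurable sets of branches and subtrees through the prefix construction. Throughout I use that $\set{0,1}^\omega$ with the coin-flipping measure is the Cantor space, so that a subset is compact precisely when it is closed, and that the coin-flipping measure is a finite Borel measure on this compact metrizable space, to which inner regularity applies.

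One inequality is routine. If $S$ is any subtree all of whose branches lie in $E$, then the set of branches of $S$ is $\set{\beta \in \set{0,1}^\omega : \text{every prefix of } \beta \text{ lies in } S}$, which is the intersection over $n$ of the clopen cylinder sets $\set{\beta : \text{the length-}n\text{ prefix of }\beta \text{ is in } S}$; hence it is closed, in particular measurable, and contained in $E$. Monotonicity of the measure gives that its probability is at most $\proba(E)$, and taking the supremum over all such $S$ shows that the right-hand side is $\le \proba(E)$.

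For the reverse inequality I would use inner regularity to write $\proba(E)$ as the supremum of $\proba(K)$ over compact $K \subseteq E$, and then exhibit, for each such $K$, a subtree whose branches all lie in $E$ and whose probability equals $\proba(K)$. The natural candidate is the subtree $S_K$ consisting of all finite words that are a prefix of some element of $K$: this set is nonempty (it contains the root) and prefix-closed, hence a subtree. The crucial claim is that the branches of $S_K$ are exactly the elements of $K$. The inclusion $K \subseteq \mathrm{branches}(S_K)$ is immediate, since every prefix of a point of $K$ is by definition in $S_K$. For the converse, if every prefix of a branch $\beta$ lies in $S_K$, then for each $n$ there is a point $x_n \in K$ sharing its length-$n$ prefix with $\beta$, so $x_n \to \beta$, and since $K$ is closed we get $\beta \in K$. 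Consequently $\mathrm{branches}(S_K) = K \subseteq E$ and $\proba(S_K) = \proba(K)$, so the right-hand supremum dominates every such $\proba(K)$ and therefore dominates $\proba(E)$.

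The main obstacle — indeed the only subtle point — is the second inclusion of the claim, which is exactly where compactness (closedness) of $K$ is used: without it a branch could have all of its finite prefixes appearing in $K$ and yet fail to belong to $K$. Combining the two inequalities yields the stated equality.
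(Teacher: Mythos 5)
Your proof is correct and follows exactly the route the paper intends: the paper offers no explicit argument beyond citing inner regularity of the coin-flipping measure, and your write-up is the standard unpacking of that citation (compact $=$ closed in Cantor space, pass to the prefix subtree of a compact $K\subseteq E$, and use closedness of $K$ to see that the branches of that subtree are exactly $K$). No gaps worth noting beyond the trivial case $K=\emptyset$, which is handled by the one-node subtree.
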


\paragraph*{Nonzero automata}
Intuitively,
a nonzero automaton is a nondeterministic parity tree automaton which has the
extra ability to check whether the set of branches satisfying the parity condition has zero or nonzero probability.

\begin{definition}\label{def:zero automata}
	The syntax of a nonzero automaton is a tuple
	\begin{align*}
		\underbrace{Q}_{\text{states}} \qquad \underbrace{\Sigma}_{\text{input alphabet}}  \qquad \underbrace{\delta \subseteq Q \times \Sigma \times Q^2}_{\text{transitions}},
	\end{align*}
	with all components finite, together with a total order $\leq$ on $Q$ and three subsets 
	\begin{align*}
		\Qall, \Qone, \Qnonzero  \subseteq Q\enspace.
	\end{align*}
\end{definition} 

A \emph{run} of the automaton on an input tree $t:\{0,1\}^*\to \Sigma$
 is an infinite binary tree $r:\{0,1\}^*\to Q$
 whose root is labelled by the maximal state of $Q$,
also called the \emph{initial} state
and
 which is consistent with the transition relation
in the usual sense,
 i.e.
$ \forall v \in\{0,1\}^*, (r(v),t(v),r(v0),r(v1))\in\Delta$.
Define the \emph{limsup} of a branch of the run to be the maximal state that appears infinitely often on the branch.

The run is \emph{accepting} if it is surely, almost-surely and nonzero accepting:
\begin{itemize}
	\item \label{sem:all} {\bf surely accepting:} every branch has limsup  in   $\Qall$.
	\item  \label{sem:zero} {\bf almost-surely accepting:} 
	the set of branches with limsup in $\Qone$ has probability $1$.
	\item \label{sem:nonzero} {\bf nonzero accepting:} for every node $v$ with state in $\Qnonzero$,
	the set of branches which visit $v$ and visit only $\Qnonzero$-labelled nodes below $v$
	has nonzero probability.
\end{itemize}
\paragraph*{The emptiness problem}

The emptiness problem asks whether an automaton has an accepting run.
Our main result:
\begin{theorem}\label{thm:emptiness}
	The emptiness problem of nonzero automata is decidable in \npconp.
\end{theorem}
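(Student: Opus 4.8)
The plan is to reduce emptiness of nonzero automata to determining the winner of a finite parity game, the \emph{jumping game}, whose winner can be decided in \npconp because parity games themselves are in \npconp. The overall strategy follows the roadmap announced in the introduction: first I would invoke the two structural results already available, namely result~i) (Lemma~\ref{lem:same-models}, which is not needed here since we start from a nonzero automaton) and especially result~ii) (Theorem~\ref{theo:probanp}), which characterises nonemptiness in the special case $\Qall=Q$ by the existence of a \emph{regular} run of size $|Q|$ that is decidable in polynomial time. The key idea is that result~ii) gives us a finite supply of ``certificates'': small regular runs of sub-automata in which the surely-accepting condition is trivialised. These regular runs become the positions of the jumping game.

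The main construction would proceed as follows. First I would set up the jumping game so that one player (call her Eve, the existential/nondeterministic player) proposes how to assemble an accepting run of the full automaton out of the regular pieces, while the opponent (Adam, the universal/co-nondeterministic player) challenges the surely-accepting parity condition along branches. Concretely, a position of the game records a current state $q$ together with the combinatorial data needed to track the three acceptance conditions locally; Eve's moves correspond to choosing transitions and choosing regular runs witnessing the almost-surely and nonzero-accepting conditions (using Theorem~\ref{theo:probanp} to guarantee such witnesses exist at size $|Q|$), and Adam's moves correspond to descending into a child subtree or ``jumping'' to test a branch. The parity priorities of the game would be derived from the total order $\leq$ on $Q$ together with membership in $\Qall$, $\Qone$ and $\Qnonzero$, so that an infinite play respecting the parity priority encodes exactly a branch whose limsup lies in $\Qall$. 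The soundness direction argues that a winning strategy for Eve can be unfolded into an actual accepting run: the regular runs chosen at each jump are glued together, the almost-sure and nonzero conditions hold because each glued piece was certified by result~ii), and the parity winning condition ensures the surely-accepting condition. The completeness direction takes an accepting run and extracts from it, via the inner-regularity approximation of Lemma~\ref{lem:approx} and the boundedness of witnesses from Theorem~\ref{theo:probanp}, a finite-memory winning strategy for Eve.

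The hard part, and the main obstacle, is showing that the three semantic acceptance conditions---which are genuinely measure-theoretic and quantify over uncountably many branches---can be faithfully captured by a \emph{finite} parity game with polynomially-bounded positions. Conditions (a) and (b) are fairly standard (the parity and qualitative conditions have been game-characterised before, cf.~\cite{DBLP:journals/tocl/CarayolHS14}), but condition (c), the nonzero-accepting condition, is the delicate one: it is not prefix-independent and it asserts the existence of a positive-probability set of branches staying in $\Qnonzero$ below \emph{every} $\Qnonzero$-node. The crux is to prove that whenever such positive-probability escape sets are needed, they can be realised by the bounded regular runs furnished by Theorem~\ref{theo:probanp}, so that Eve never needs more than a fixed finite amount of memory and the game stays finite. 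Establishing this equivalence between the measure-theoretic semantics and the combinatorial jumping-game semantics---both soundness and completeness---is where essentially all the work lies; once it is in place, the complexity bound \npconp follows immediately from the corresponding bound for solving parity games, since the jumping game has size polynomial in the automaton.
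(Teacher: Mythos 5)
Your overall roadmap matches the paper's: reduce emptiness to a parity game (the jumping game) whose positions for one player are built from $\set{\Qone,\Qnonzero}$-accepting runs, use Theorem~\ref{theo:probanp} to handle the almost-sure and nonzero conditions inside each glued piece, and use the parity condition of the game to recover the surely-accepting condition across pieces. However, there is a genuine gap in your complexity argument. You claim the \npconp\ bound ``follows immediately \ldots\ since the jumping game has size polynomial in the automaton.'' It does not have polynomial size: Pathfinder's positions are \emph{profiles}, i.e.\ subsets of $Q\times Q$ recording pairs $(q,m)$ where $q$ is the state of a non-root node and $m$ is the maximal state of its strict ancestors, and there are exponentially many such subsets. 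Guessing a positional strategy in this game and verifying it therefore only yields {\sc nexptime} $\cap$ co-{\sc nexptime}, which the paper explicitly points out before introducing the extra idea you are missing: \emph{winning witnesses}. These are polynomial-size objects $(W,s)$ with $W\subseteq Q$ and $s:W\to 2^{W\times W}$ that condense a positional winning strategy of Automaton (conditions $(\alpha)$, $(\beta)$) or of Pathfinder (conditions $(\gamma)$, $(\delta)$); the parity-type conditions $(\alpha)$, $(\gamma)$ are checked by solving a one-player parity game, and the profile-existence conditions $(\beta)$, $(\delta)$ are checked in polynomial time by running the $\Qall$-trivial emptiness algorithm of Theorem~\ref{theo:probanp} on a modified automaton whose states carry the maximal ancestor state, restricted to $\{(q,\bot)\}\cup s(q)$ or to its complement. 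The need for witnesses on \emph{both} sides is precisely what gives membership in co-{\sc np} as well as {\sc np}.

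A secondary imprecision: in the correctness of the game (the analogue of Lemma~\ref{lem:jumping-game-iff}), the runs glued together are arbitrary $\set{\Qone,\Qnonzero}$-accepting runs, trimmed via the inner-regularity Lemma~\ref{lem:approx} to leaf-free subtrees of probability at least $\tfrac{1}{2}$ all of whose branches have limsup in $\Qone$, and further augmented to preserve the nonzero condition; regularity of witnesses from Theorem~\ref{theo:probanp} plays no role there and is only needed for the polynomial-time checks above. You conflate the two uses, which obscures where the measure-theoretic work actually happens. Your identification of condition (c) as the delicate point is correct, but the device that tames it is the profile construction (tracking the maximal ancestor state) together with the witness machinery, neither of which your sketch supplies.
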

\begin{proof}
This is a corollary of a series of intermediary results.
 In section~\ref{sec:proba} we focus on the special case where 
$\Qall=Q$ and provide an polynomial time algorithm to decide emptiness in this special case (Theorem~\ref{theo:probanp}).
In section~\ref{sec:jumping} we reduce the emptiness problem for nonzero automata to the computation of the winner in a parity game called the \emph{jumping game} (Lemma~\ref{lem:jumping-game-iff}) and give an \npconp algorithm to compute the winner of the jumping game (Lemma~\ref{lem:jumping-game-decide}).
\end{proof}

\paragraph*{Zero automata}

Nonzero automata are a variant of \emph{zero automata} introduced in~\cite{thinzero}.
A zero automaton differs slightly from a nonzero automaton in that it uses a notion of ``seed state'' for the nonzero acceptance condition. On top of $\Qall,\Qone$ and $\Qnonzero$ there is a subset $\Qseed \subseteq Q$. A run is accepting if it is surely, almost-surely and \emph{zero} accepting:
\begin{itemize}
\item{\bf  zero accepting:}
for every node $v$ with state $q\in\Qseed$,
there is nonzero probability that the run
visits only states $\leq q$ below $v$
and has limsup in $\Qnonzero$.
\end{itemize}

In the next section,
we show that every zero automaton can be transformed in an equivalent nonzero automaton of quadratic size (Lemma~\ref{lem:same-models}). Combined with
Theorem~\ref{thm:emptiness},
\begin{corollary}\label{thm:emptinesszero}
	The emptiness problem of zero automata is in \npconp.
\end{corollary}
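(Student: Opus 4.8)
The plan is to obtain the corollary by composing the two results already established, namely Lemma~\ref{lem:same-models} and Theorem~\ref{thm:emptiness}, viewing the former as a many-one reduction from the emptiness problem for zero automata to the emptiness problem for nonzero automata. Concretely, given a zero automaton $\mathcal Z$, I would first apply Lemma~\ref{lem:same-models} to produce an equivalent nonzero automaton $\mathcal N$ of quadratic size. Equivalence means that $\mathcal Z$ and $\mathcal N$ accept the same trees, so in particular $\mathcal Z$ has an accepting run if and only if $\mathcal N$ does, i.e.\ $\mathcal Z$ is empty exactly when $\mathcal N$ is empty. Thus deciding emptiness of $\mathcal Z$ reduces to deciding emptiness of $\mathcal N$, and by Theorem~\ref{thm:emptiness} the latter lies in \npconp.

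To conclude that the emptiness problem for zero automata is itself in \npconp, the point to verify is that the translation of Lemma~\ref{lem:same-models} is a genuine polynomial-time many-one reduction, and that \npconp is closed under such reductions. For the first part it suffices to observe that the construction underlying Lemma~\ref{lem:same-models} is explicit and computable in polynomial time: the output has size $O(|\mathcal Z|^2)$, and each of its components (the state set, input alphabet, transition relation $\delta$, the order $\leq$, and the sets $\Qall,\Qone,\Qnonzero$) is obtained by a routine rule from the components of $\mathcal Z$. For the second part I would invoke the standard fact that \npconp is closed under polynomial-time many-one reductions: composing the reduction with the {\sc np} (respectively co-{\sc np}) certificate-verification procedure supplied by Theorem~\ref{thm:emptiness} yields an {\sc np} (respectively co-{\sc np}) procedure for emptiness of zero automata, since the reduction runs in polynomial time and produces an instance of polynomially bounded size.

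The corollary is therefore essentially immediate once Lemma~\ref{lem:same-models} and Theorem~\ref{thm:emptiness} are in hand, and there is no genuine mathematical obstacle remaining at this stage. The only point requiring a moment's care --- and the place where a hasty argument could slip --- is to confirm that Lemma~\ref{lem:same-models} delivers an \emph{efficiently computable} map and not merely an existence statement about a quadratic-size equivalent automaton: were the translation only known to exist rather than to be computable in polynomial time, the clean complexity transfer would break down. All the substantive content of both the decidability and the complexity bound lives in the proofs of the underlying lemma and theorem, which respectively handle the semantic interplay between the zero and nonzero acceptance conditions and the reduction to the jumping game.
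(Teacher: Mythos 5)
Your proposal is correct and follows exactly the paper's route: the corollary is obtained by composing the quadratic-size translation of Lemma~\ref{lem:same-models} with Theorem~\ref{thm:emptiness}, using closure of \npconp\ under polynomial-time reductions. Your added remark that the translation must be efficiently computable (not merely an existence statement) is a sensible point of care, and the construction in Section~\ref{sec:zerononzero} indeed provides this.
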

According to~\cite{thinzero},
this implies that {\sc tmso }+ $\zero$
has decidable satisfiability 
when $\zero$ is the unary predicate checking that a set of branches has probability $0$.
 
 \paragraph*{An example: the dense but not very dense language}
A tree over alphabet $\set{a,b}$ is  \emph{dense but not very dense} if:
\begin{enumerate}
	\item every node has a descendant with label $a$; and
	\item there is zero probability that a branch visit infinitely many nodes with letter $a$.
\end{enumerate}
This language is non-empty,
contains no regular tree
and is recognised by a nonzero automaton.
This automaton has three states, totally ordered as follows:
\begin{align*}
  	  \underbrace{s}_{\text{searching for $a$}} < \qquad  \underbrace{n}_{\text{not searching for $a$}} \qquad < \qquad 	\underbrace{f}_{\text{just found $a$}} \enspace.
\end{align*}
The automaton begins in state $f$ in the root.
When the automaton reads a node with label $b$, then it sends $s$ to some child and $n$ to the other child,
regardless of its current state.
Choosing which child gets $s$ and which child gets $n$ is the only source of nondeterminism in this automaton.
When the automaton sees  letter $a$,  it sends $f$ to both children regardless of its current state.
The acceptance condition is:
 \begin{align*}
 	\Qall = \set{n,f} \qquad \Qone = \set{n} \qquad \Qnonzero = \emptyset\enspace.
 \end{align*}


\section{From zero to nonzero automata\label{sec:zerononzero}}

In this section we show that nonzero automata are as expressive as zero automata.

\begin{lemma}\label{lem:same-models}
	For every zero automaton one can compute a nonzero automaton of quadratic size which accepts the same trees.
\end{lemma}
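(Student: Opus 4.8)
The plan is to realise the nonzero automaton as the given zero automaton $\mathcal{A}$ equipped with one extra coordinate that records an \emph{active threshold}. Concretely I would take the state set $Q \times (\Qseed \cup \{\infty\})$, where a second coordinate $q \in \Qseed$ marks that the node lies inside a region dedicated to witnessing the zero obligation of some ancestral seed whose state is $q$, while $\infty$ marks that no obligation is currently tracked. Since $\Qseed \subseteq Q$, this has size at most $|Q|(|Q|+1)$, which is quadratic as required. The automaton copies the transitions of $\mathcal{A}$ on the first coordinate; on the second coordinate the threshold is only ever lowered, a fresh threshold equal to the current state is installed whenever that state is a seed, and inside a region with threshold $q$ only transitions whose targets are $\le q$ keep the region alive. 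I would set $\Qnonzero'$ to be exactly the states whose second coordinate lies in $\Qseed$, so that the nonzero-accepting condition asks, from every seed node, for a positive-probability subtree that never leaves the region and hence stays $\le q$.

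The delicate choice is the surely-accepting set $\Qall'$ together with the order on the new states. I would make $\Qall'$ agree with $\Qall$ through the first coordinate outside tracked regions, and require the first coordinate to lie in $\Qnonzero$ on branches that remain inside a tracked region forever; $\Qone'$ ports the almost-sure set in the same spirit. With these choices both directions go through via Lemma~\ref{lem:approx}. Given an accepting run of $\mathcal{A}$, at each seed $v$ with state $q$ the zero condition yields positive probability of staying $\le q$ with $\limsup$ in $\Qnonzero$ below $v$; Lemma~\ref{lem:approx} turns this into a positive-measure subtree below $v$ all of whose branches stay $\le q$ and have $\limsup$ in $\Qnonzero$, and I would label exactly that subtree as the tracked region, steering every other branch out of it. Conversely, from an accepting run of the nonzero automaton I would forget the second coordinate: the surely and almost-sure conditions transfer directly, and for a seed $v$ the nonzero condition supplies a positive-probability set of branches that never leave its region — so they stay $\le q$ — while the surely condition forces each of them to have $\limsup$ in $\Qnonzero$, which is exactly the zero obligation at $v$.

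The heart of the difficulty is that the two conditions are of different logical character: the zero condition is a positive-probability statement about a $\limsup$ (a B\"uchi/parity-style event, ``the largest state seen infinitely often lies in $\Qnonzero$''), whereas the nonzero condition is a positive-probability statement about a pure safety event (``never leave $\Qnonzero'$''). Since a B\"uchi condition cannot be turned into a safety condition with finite memory, the $\limsup$ cannot be encoded inside the safety region; instead the construction must \emph{split} the obligation, letting the safety part (staying $\le q$) be certified by the nonzero condition and off-loading the B\"uchi part ($\limsup$ in $\Qnonzero$) onto the global parity condition $\Qall'$, with Lemma~\ref{lem:approx} guaranteeing that the positive-measure subtree on which the safety part holds is simultaneously one on which the parity part holds surely. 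The price is a genuinely subtle piece of bookkeeping, and this is where I expect the real work to lie: the threshold must be managed — non-increasing, reset downward at nested seeds, never re-raised — so that along every branch the regime ``eventually always inside a tracked region'' is unambiguous and is precisely the regime on which $\Qall'$ imposes membership in $\Qnonzero$, while at the same time every seed node is guaranteed to start a region (so that no obligation is left unchecked in the backward direction) and off-witness branches can still be abandoned (so that no obligation is over-stated in the forward direction). Reconciling these opposing pressures — forcing every seed to be tracked, yet allowing tracked regions to be left — without destroying the clean separation of the two regimes by the single $\limsup$ order is the main obstacle I anticipate.
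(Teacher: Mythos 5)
Your overall architecture (a product automaton with a second coordinate marking a tracked region, the nonzero condition certifying that the region has positive measure, the parity condition certifying the limsup behaviour inside the region, and Lemma~\ref{lem:approx} to pass between positive-probability events and positive-measure leaf-free subtrees) is the same as the paper's. But there is a genuine gap, and it sits exactly at the tension you name at the end without resolving: nested seeds. Suppose a seed node $v$ with state $q$ has, inside the positive-measure subtree $T_v$ that Lemma~\ref{lem:approx} extracts from the zero condition at $v$, a descendant seed node $v'$ with state $q'<q$. Your rules force the threshold to drop to $q'$ at $v'$ (it must, or else $v'$'s obligation is never checked), and from then on every branch of $T_v$ that visits a state $p$ with $q'<p\le q$ below $v'$ is expelled from the tracked region. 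Those branches were legitimate witnesses for $v$ but no longer count towards $v$'s nonzero condition, and with infinitely many nested seeds the surviving measure can drop to zero. Moreover, since your $\Qnonzero'$ is simply ``second coordinate in $\Qseed$'', the nonzero condition evaluated at $v'$ cannot distinguish ``staying in $v'$'s region'' from ``staying in $v$'s region'', so it does not actually certify $v'$'s obligation (staying $\le q'$) either. The forward direction of your equivalence therefore breaks as stated.

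The paper's resolution is its Lemma~\ref{lem:strong}: every zero-accepting run is \emph{strongly} zero accepting, meaning each seed $v$ has a witness node $w$ with $r(w)=f\in\Qnonzero$ such that below $w$ the run stays $\le f$, has limsup exactly $f$, and visits \emph{no seed state other than possibly $f$ itself}. This is proved by choosing a seed-consistent descendant of $v$ carrying a minimal seed state together with a $\sigma$-additivity argument, and it is then hard-wired into the state space (subtree-guessing states $(p,f,*)$ require $p\notin\Qseed$ or $p=f$), so a guessed region never spawns a foreign obligation and the conflict above cannot arise. Without this lemma, or something equivalent, your construction does not go through; with it, your threshold coordinate essentially has to be replaced by the paper's target coordinate $f$ (the exact limsup value, not the seed bound). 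Two smaller issues: a positive-measure leaf-free subtree need not satisfy the nonzero condition at every one of its nodes (it can have ``thin'' parts carrying conditional measure zero, so you must first prune to the nodes through which the subtree retains positive measure); and for branches that enter and leave tracked regions infinitely often you must arrange the order on the product so that the limsup is attained at an untracked state (the paper does this by making every auxiliary state strictly smaller than the normal state that launched it), otherwise $\Qall'$ tests the wrong condition on such branches.
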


The rest of the section is dedicated to the proof of Lemma~\ref{lem:same-models},
which is a direct corollary of Lemma~\ref{lem:projnonzero} and Lemma~\ref{lem:zerotononzero} below.

Without loss of generality, we assume that in every
\emph{zero} automaton $
\Qnonzero \subseteq \Qone \subseteq \Qall$.
Changing $\Qone$ for $\Qone \cap \Qall$ and
$\Qnonzero$ for  $\Qnonzero \cap \Qone$
does not modify the set of accepting runs of a zero automaton,
since all branches should have limsup in $\Qall$
and if the limsup is equal with nonzero probability to some $q\in\Qnonzero$ then necessarilly $q\in \Qone$.
By contrast, for \emph{nonzero} automata there is no obvious reason for the same remark to hold.



%

We make use of an intermediary acceptance condition. Let $r$ be a run. We say that a path from a node $v$ to a node $w$
is \emph{seed-consistent} if whenever the path visits a seed state $s$, subsequent states are $\leq s$.
\begin{itemize}
\item
{\bf Strong zero acceptance condition:}
for every node $v$ labelled by a seed state,
there is a seed-consistent path from $v$ to a
strict descendant $w$ of $v$ such that the state $r(w)$ of $w$ is in $\Qnonzero$
and there is nonzero probability that the run
\begin{itemize}
\item
visits only states $\leq r(w)$ below $w$,
\item
has limsup $r(w)$,
\item
in case $ r(w)\not\in\Qseed$, visits no seed state below $w$,
\item
in case $ r(w)\in\Qseed$, visits no seed state other than $ r(w)$ below $w$.
\end{itemize}
\end{itemize}

Actually, the strong zero and zero acceptance conditions coincide (proof in appendix):
\begin{lemma}\label{lem:strong}
A run is zero accepting if and only if it is strongly zero accepting.
\end{lemma}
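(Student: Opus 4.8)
The plan is to prove Lemma~\ref{lem:strong} by establishing both implications between the zero and strong zero acceptance conditions.

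First I would prove that strong zero acceptance implies zero acceptance, which should be the easier direction. Suppose $r$ is strongly zero accepting, and let $v$ be a node labelled by a seed state $q \in \Qseed$. The strong condition gives a seed-consistent path from $v$ to a strict descendant $w$ with $r(w) \in \Qnonzero$ and all the listed properties. I would argue that the event witnessing strong acceptance at $w$ — that the run below $w$ visits only states $\leq r(w)$, has limsup exactly $r(w)$, and respects the seed constraints — together with the seed-consistency of the finite path from $v$ to $w$, yields an event of nonzero probability witnessing the plain zero condition at $v$. The key observation is that seed-consistency along the path from $v$ to $w$ guarantees that states after visiting $q$ are $\leq q$, and since $r(w) \leq q$ (because $w$ is below $v$ on a seed-consistent path passing through the seed $q$), the states visited below $v$ are all $\leq q$. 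Thus the branches realising the strong condition below $w$, prefixed by the path from $v$ to $w$, form a set of nonzero probability on which the run visits only states $\leq q$ and has limsup in $\Qnonzero$.

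The harder direction is that zero acceptance implies strong zero acceptance. Given a seed node $v$ with state $q$, plain zero acceptance provides nonzero probability that below $v$ the run stays $\leq q$ with limsup in $\Qnonzero$. The goal is to locate a specific descendant $w$ and a seed-consistent path to it satisfying the refined limsup and seed-freeness conditions. The natural approach is to decompose the nonzero-probability event into pieces according to the exact value of the limsup and the positions of the highest seed states encountered. Since the probability is nonzero, at least one such refined piece must carry nonzero probability; I would use this to extract a node $w$ where the limsup below stabilises to a single state $r(w) \in \Qnonzero$ and the seed constraints of the strong condition hold. Concretely, I expect to argue that on a positive-measure set of branches the limsup below $v$ equals some fixed state in $\Qnonzero$, and that one can push $w$ far enough down a seed-consistent path so that no further seed states (or only $r(w)$ itself) appear below $w$; the seed-consistency of the path from $v$ to $w$ follows because staying $\leq q$ below the seed $q$ is exactly seed-consistency at $v$.

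The main obstacle will be the measure-theoretic bookkeeping in this second direction: turning a single nonzero-probability event about ``states $\leq q$ with limsup in $\Qnonzero$'' into the conjunction of the four sharper requirements of the strong condition, in particular pinning down a single limsup value and arranging that below $w$ no spurious seed states recur. I expect to handle this by a countable case analysis — there are finitely many candidate limsup states and the seed states form a finite set — so that positivity of the total measure forces positivity of one of the finitely many refined events, and then by choosing $w$ deep enough along a branch of the witnessing subtree one eliminates the unwanted seed visits. Lemma~\ref{lem:approx} may be useful here to pass between measurable events and subtrees whose every branch lies in the event, giving the concrete node $w$ and the seed-consistent path to it.
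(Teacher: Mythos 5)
The easy direction (strong zero $\Rightarrow$ zero) is fine and matches what the paper takes for granted, and your use of $\sigma$-additivity over the finitely many candidate limsup values and countably many candidate nodes $w$ does reappear in the paper's argument. But in the hard direction your proposal has a genuine gap, visible in two specific claims. First, you assert that seed-consistency of the path from $v$ to $w$ ``follows because staying $\leq q$ below the seed $q$ is exactly seed-consistency at $v$.'' It is not: seed-consistency requires that \emph{every} seed state met along the path constrains all subsequent states, so if the path passes through a seed state $s<q$ and later rises above $s$ while staying $\leq q$, it is not seed-consistent. Second, ``choosing $w$ deep enough along a branch'' does not arrange that the witnessing event below $w$ avoids spurious seed states: if you pick a limsup value $f$ whose positive-measure event necessarily keeps revisiting some smaller seed state $s<f$ (forced, say, because every state-$f$ node has a state-$s$ child feeding a thin all-$s$ subtree required by the zero condition at those nodes), then for \emph{no} node $w$ with $r(w)=f$ does a positive-measure set of branches below $w$ stay $\leq f$, have limsup $f$, and avoid $s$. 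Your decomposition guarantees only that \emph{some} refined piece has positive measure, not that the piece you land on admits a witness.

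The missing idea, which is the heart of the paper's proof, is to first move from $v$ to an auxiliary node $z$: among all descendants of $v$ (including $v$) that carry a seed state and are reachable from $v$ by a seed-consistent path, choose $z$ so that $r(z)$ is \emph{minimal}, and then apply the zero condition at $z$ rather than at $v$ to produce $w$ with $Z_w$ of positive probability and the path from $z$ to $w$ labelled by states $\leq r(z)$. Minimality of $r(z)$ does both jobs at once: any seed state on the path from $z$ to $w$ is $\leq r(z)$ by the labelling and $\geq r(z)$ by minimality, hence equals $r(z)$, so the concatenated path from $v$ to $w$ is seed-consistent; and any seed state $r(z')$ met below $w$ by a branch of $Z_w$ satisfies $r(z')\leq r(w)\leq r(z)\leq r(z')$, forcing $r(z')=r(w)$, which is exactly the last two clauses of the strong condition. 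Without this detour through the minimal seed state, the node $w$ you extract from the event at $v$ need not satisfy either requirement, so the proof as proposed does not go through.
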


\paragraph*{Construction of the nonzero automaton}
Intuitively, every \emph{zero} automaton can be simulated by a \emph{nonzero} automaton which guesses on the fly a run of the zero automaton and checks simultaneously that the guessed run is strongly zero accepting.
Whenever the automaton visits a node $v$ with a seed state
then it enters
in the next step a \emph{path-finding state} and guesses a seed-consistent path to a node $w$ which is a witness of the strong zero condition. 
Once on the node $w$ the automaton enters  a \emph{subtree-guessing} state and starts guessing a leaf-free subtree of the run, whose nodes are labelled by states $\leq  r(w)$, whose branches have limsup $ r(w)$ and which has nonzero probability.

There are some verifications to do in order to certify that the guessed run is strongly zero accepting.
The surely accepting condition is used to prevent the automaton to stay forever in the path-finding mode and also to check that every  branch of the subtree has limsup $ r(w)$.
The nonzero condition is used to check that the subtree has nonzero probability.
To perform these verifications, the nonzero automaton stores some data in its control state. In path-finding mode the automaton records the smallest seed state seen so far in order to check  on-the-fly that the path from $v$ to $w$ is seed-consistent.
In subtree-guessing mode the automaton keeps track of the state $ r(w)$.

The set of states of this automaton is denoted $R$,
every state in $R$ has as a first component a
control state $Q$ of the zero automaton.  Precisely, $R$ is the union of three sets:
\begin{itemize}
\item
{\bf normal states:}
$Q$
\item
{\bf path-finding states:}
$\{(q,s) \mid q \in Q, s\in\Qseed, q\leq s\}$,
\item
{\bf subtree-guessing states:}
 $\{ (q,f,*) \mid q \in Q, f \in \Qnonzero, q \leq f, (q\not \in\Qseed \lor q=f)\}$.
\end{itemize}

We equip $R$ with any order $\prec$ such that
\begin{itemize}
\item
the projection on the first component  $\Pi_1: (R,\prec)\to (Q,<)$
 is monotonic,
\item
$(q,s) \prec q$ for every $q\in Q$ and $s\in\Qseed$ with $q\leq s$.
\end{itemize}
The zero, almost-surely and surely
accepting conditions
are defined respectively as:
\begin{align*} 
& \Rnonzero = \text{ the set of subtree-guessing states,}\\
& \Rone = \Qone \cup  \{(f,f,*) \mid f \in \Qnonzero\},\\
& \Rall = \Qall \cup  \{(f,f,*) \mid f \in \Qnonzero\}
\enspace.
\end{align*} 

The transitions of the automaton can be informally described as follows.
The nonzero automaton guesses on the fly a run $\rho:\{0,1\}^*\to Q$ of the zero automaton by storing the value
of $ \rho(v)$ as the first component of its own control state on the node $v$.
The nonzero automaton stays in the set of normal states as long as the run does not enter a seed state. On a node $v$ labelled by $s\in\Qseed$,
the nonzero automaton  starts looking for a path to a descendant node $w$ that satisfies the  strong zero condition.
For that in the next step the automaton enters either a path-finding or a subtree-guessing state.
While in a path-finding state, the automaton guesses on the fly a seed-consistent path. Whenever the run is in a nonzero state $f\in\Qnonzero$ the nonzero automaton can enter the subtree-guessing state $(f,f,*)$, or not.
While in subtree-guessing mode the second component is constant, and the automaton control state is of type
$(q,f,*)$ with $q\leq f$ and $q\not\in\Qseed$ unless $q=f\in\Qseed$.
From a subtree-guessing state the automaton may switch back any time to a normal state.

Formally, for every transition $q\to r_0,r_1$ of the zero automaton,
there is a transition
\[
q'\to r_0',r_1'
\]
in the nonzero automaton if
the first component of $q'$ is $q$ and 
\[
r'_0=
\begin{cases}
r_0 & \text{ whenever $q'$ is not path-finding}\\
(r_0,r_0,*) & \text{ whenever }
\begin{cases}
\text{$q\in \Qseed,q'=q$
and $r_0\in \Qnonzero$ and $r_0\leq q$}\\
\text{or  $q'=(q,s)$ 
and $r_0\in \Qnonzero$ and $r_0\leq s$,}
\end{cases}\\
(r_0,f,*) & \text{ whenever $q'=(q,f,*)$ and $r_0\leq f$
and $(r_0\not\in \Qseed \lor r_0 = f )$.}\\
\end{cases}
\]
The possible values of $r'_1$ are symmetric.
There are also \emph{left path-finding transitions}: for every seed states $s,s'\in\Qseed$ such that $q\leq s$ and $r_0\leq s$
there are transitions
\[
q' \to (r_0,s'),r_1
\text{ where } 
q'=
\begin{cases}
q \text{ or } (q,q) \text{ if } q =s\\
(q,s) \text{ otherwise }
\end{cases}
\text{ and }
s' =
\begin{cases}
s & \text{ if $r_0\not \in\Qseed$}\\
r_0 & \text{ if $r_0 \in\Qseed$}.
\end{cases}
\]
There may also be a symmetric  \emph{right path-finding transition}
$
(q,s)\to r_0,(r_1,s')
$
when the symmetric conditions hold.


The next two lemmas relate the accepting runs of the zero and the nonzero automata,
their proofs can be found in the appendix.

\begin{lemma}\label{lem:projnonzero}
Let $d:\{0,1\}^*\to R$ be an accepting run of the nonzero automaton.
Then its projection $r=\Pi_1(d)$ on the first component 
is an accepting run of the zero automaton.
\end{lemma}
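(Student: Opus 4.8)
The plan is to check that $r=\Pi_1(d)$ is a legal run of the zero automaton and then to verify its three acceptance conditions one by one. Legality is immediate: by construction every transition of the nonzero automaton projects under $\Pi_1$ to a transition of the zero automaton, so the transitions used by $d$ project to transitions of the zero automaton; moreover $\Pi_1$ is monotonic, so it maps the maximal state of $R$ to the maximal state of $Q$ and the root of $r$ carries the correct initial state.

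For the surely and almost-surely conditions I would first record that, since $\Pi_1$ is monotonic, projection commutes with taking the limsup along a branch: on every branch the $r$-limsup equals $\Pi_1$ applied to the $d$-limsup. Indeed the values occurring infinitely often in the projected branch are exactly the $\Pi_1$-images of the states occurring infinitely often in $d$, and a monotone map sends the maximum of a finite set to the maximum of its image. Using the standing assumption $\Qnonzero\subseteq\Qone\subseteq\Qall$, the only non-normal states of $\Rall$ and $\Rone$ are the states $(f,f,*)$ with $f\in\Qnonzero$, which project into $\Qnonzero$; hence $\Pi_1$ maps $\Rall$ into $\Qall$ and $\Rone$ into $\Qone$. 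The surely condition for $d$ then gives, on every branch, an $r$-limsup in $\Qall$, and the almost-surely condition transfers because the probability-one set of branches whose $d$-limsup lies in $\Rone$ is contained in the set of branches whose $r$-limsup lies in $\Qone$.

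The substance of the lemma is the zero condition: for every node $v$ with $r(v)=q\in\Qseed$ I must produce a positive-probability set of branches that, below $v$, visit only states $\le q$ and have limsup in $\Qnonzero$. Here I would exploit the mode structure enforced by the construction. From a seed-labelled node the run leaves the normal mode and enters, in the next step, a path-finding or subtree-guessing state; a path-finding state may only be followed by path-finding or subtree-guessing states, and its recorded seed bound is nonincreasing and at most $q$ once $q$ has been read, so the first component stays $\le q$ and the traversed path is seed-consistent. Since no path-finding state belongs to $\Rall$, the surely condition forbids remaining in path-finding mode forever; hence along a branch issuing from $v$ the run enters subtree-guessing at a node $w$, necessarily in a state $d(w)=(f,f,*)$ with $f=r(w)\in\Qnonzero$ and $f\le q$, where the states strictly between $v$ and $w$ project to values $\le q$. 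When $d(v)$ is itself a subtree-guessing state, the constraints force $d(v)=(f,f,*)$ with $f=q\in\Qnonzero$, and I simply take $w=v$.

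It remains to promote $w$ to the required probabilistic witness, which is the crux. As $d(w)=(f,f,*)\in\Rnonzero$, the nonzero condition for $d$ supplies a positive-probability set of branches that visit $w$ and stay inside $\Rnonzero$, i.e.\ remain subtree-guessing, below $w$. Along any such branch the second component is frozen at $f$, so every state has the form $(\cdot,f,*)$ and projects to a value $\le f\le q$; and since its $d$-limsup is a subtree-guessing state it must, by the surely condition, lie in $\Rall$, which forces it to equal $(f,f,*)$ and hence the $r$-limsup to be $f\in\Qnonzero$. Reaching $w$ from $v$ has positive probability, so these branches form a positive-probability set below $v$ on which all states are $\le q$ and the limsup is in $\Qnonzero$ -- exactly the zero condition. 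I expect the main obstacle to be this final transfer, namely matching the automaton-level nonzero condition of $d$ with the probabilistic zero requirement of the zero automaton and using the surely condition to pin the limsup of the witness branches to $(f,f,*)$; by comparison the verification that path-finding preserves seed-consistency and the bound $\le q$ is routine bookkeeping.
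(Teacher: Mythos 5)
Your proof is correct and follows essentially the same route as the paper's: monotonicity of $\Pi_1$ transfers the surely and almost-surely conditions, and the zero condition is witnessed by locating a subtree-guessing node $w$ below the seed node $v$ (using that no path-finding state lies in $\Rall$, so path-finding must terminate) and then invoking the nonzero condition of $d$ at $w$. The only cosmetic difference is that you pin the limsup of the witness branches to $(f,f,*)$ via the surely condition, whereas the paper uses the almost-surely condition; both work.
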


\begin{lemma}\label{lem:zerotononzero}
If the zero automaton has an accepting run $r:\{0,1\}^*\to Q$
then the nonzero automaton has an accepting run $d:\{0,1\}^*\to R$ such that $r=\Pi_1(d)$.
\end{lemma}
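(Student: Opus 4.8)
The plan is to prove Lemma~\ref{lem:zerotononzero}: given an accepting run $r:\{0,1\}^*\to Q$ of the zero automaton, construct an accepting run $d$ of the nonzero automaton projecting to $r$. The construction must witness the \emph{strong} zero acceptance condition, so first I would invoke Lemma~\ref{lem:strong} to replace zero acceptance of $r$ by strong zero acceptance. This gives me, for every node $v$ whose state $r(v)$ is a seed state, a seed-consistent path to a strict descendant $w$ together with a nonzero-probability subtree below $w$ on which every branch has limsup $r(w)$, uses only states $\leq r(w)$, and respects the seed restrictions. The lifted run $d$ will superimpose on $r$ the bookkeeping of $R$: it stays in \textbf{normal} states (copying $r$) until it reaches a seed node, then switches to \textbf{path-finding} states following the chosen seed-consistent path, then enters a \textbf{subtree-guessing} state at the witness $w$ and traces the chosen subtree, and finally drops back to normal states off the subtree.

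The core of the construction is a careful, simultaneous choice of witnesses. The subtlety is that seed nodes can nest: a seed node may sit inside the subtree witnessing an earlier seed node, or on a path-finding path, so I must define $d$ so that the bookkeeping components are consistent everywhere at once. First I would fix, for each seed node $v$, a witness path and subtree as above (using strong zero acceptance and Lemma~\ref{lem:approx} to get the nonzero-probability leaf-free subtree). Then I would define $d$ node by node: at a normal seed node the automaton nondeterministically commits to the stored path, recording the smallest seed state seen so far in the second component so that seed-consistency is checked on the fly by the path-finding transitions; upon reaching $w$ it enters $(r(w),r(w),*)$ and, while inside the chosen subtree, uses subtree-guessing transitions $(q,f,*)$ with $f=r(w)$ fixed and $q\leq f$; it returns to a normal state at nodes outside the subtree. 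The side conditions on the subtree-guessing states ($q\not\in\Qseed$ unless $q=f$) match exactly the last two clauses of the strong zero condition, so they are respected by construction.

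Finally I would verify that $d$ is surely, almost-surely, and nonzero accepting. \emph{Surely accepting:} on every branch, either the branch eventually stays normal, where $\Pi_1$-limsup lies in $\Qall$ by acceptance of $r$ and the order $\prec$ is monotone on first components, or it is eventually trapped in subtree-guessing states $(q,f,*)$ with fixed $f$ whose limsup is $(f,f,*)\in\Rall$; the key point is ruling out a branch stuck forever in path-finding mode, which I would exclude using that each path-finding path chosen was finite together with the monotonicity of $\prec$ and the definition that $(q,s)\prec q$. \emph{Almost-surely accepting:} off the chosen subtrees the $\Rone$ condition follows from that of $r$; on the subtrees the limsup is the designated accepting state $(f,f,*)\in\Rone$. \emph{Nonzero accepting:} for every subtree-guessing node, the set of branches staying in $\Rnonzero$ below it is exactly the branch set of the chosen nonzero-probability subtree, so it has nonzero probability by the choice made above. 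The main obstacle I expect is the consistency of the simultaneous choice when seed nodes nest inside path-finding paths or subtrees: I must ensure that entering a new path-finding phase inside an earlier subtree does not corrupt the limsup or the probability accounting of the enclosing subtree, which is where the precise side conditions distinguishing $r(w)\in\Qseed$ from $r(w)\not\in\Qseed$ in the strong zero condition do the essential work.
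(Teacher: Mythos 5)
Your proposal follows essentially the same route as the paper's proof: invoke Lemma~\ref{lem:strong} to upgrade $r$ to strong zero acceptance, fix for each seed node a witness path and (via Lemma~\ref{lem:approx}) a nonzero-probability leaf-free subtree, lift $r$ to $d$ through the normal/path-finding/subtree-guessing phases, and verify the three acceptance conditions by a case analysis on branches. The nesting issue you flag is resolved in the paper exactly as you anticipate: a committed path-finding phase simply ignores later seed nodes and keeps following its chosen path, and inside a witness subtree no new phase is needed because the strong zero condition excludes seed states other than $r(w)$ there.
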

\section{Emptiness of $\Qall$-trivial automata is in NP}
\label{sec:proba}
 A run of a nonzero automaton needs to satisfy simultaneously three conditions, which correspond to the accepting sets $\Qall,\Qone,\Qnonzero$. For a subset 
\begin{align*}
  I \subseteq \set{\Qall,\Qone,\Qnonzero }
\end{align*}
define $I$-automata to be the special case of nonzero automata where only the acceptance conditions corresponding to $I$ need to be satisfied. These are indeed special cases: ignoring $\Qnonzero$ can be achieved by making it empty, ignoring $\Qone$ can be achieved by making it equal to $\Qall$, and ignoring $\Qall$ can be achieved by making it equal to all states $Q$.

\paragraph*{Generalising parity automata, with standard and qualitative semantics}

A $\set{\Qall}$-automaton is a parity automaton. Thus solving emptiness for nonzero automata is at least as hard as emptiness for parity automata on trees, which is polynomial time equivalent to solving parity games,
in {\sc np }$\cap$ co{\sc np} or in quasi-polynomial time~\cite{calude}.

A $\set{\Qone}$-automaton is the same as a parity automaton with qualitative semantics as introduced in~\cite{DBLP:journals/tocl/CarayolHS14}.  Emptiness for such automata can be solved in polynomial time using standard linear programming algorithms for Markov decision processes.

%



\paragraph*{Subzero automata}
A $\set{\Qone,\Qall}$-automaton is the same as a \emph{subzero} automaton as considered in~\cite{DBLP:journals/corr/MichalewskiMB16}. In~\cite{DBLP:journals/corr/MichalewskiMB16}, it was shown how to decide if a subzero automaton accepts some regular tree.   Since some subzero automata are nonempty but accept no regular trees, see e.g.~the example in \cite{thinzero}, the result from~\cite{DBLP:journals/corr/MichalewskiMB16} does not solve nonemptiness for subzero automata. 

\paragraph*{$\Qall$-trivial automata}
In a $\set{\Qone,\Qnonzero}$-automaton, the surely accepting condition is trivial, i.e. $\Qall=Q$.
We call such automata $\Qall$-trivial.
The 
acceptance of a run of a $\Qall$-trivial automaton
depends only on the probability measure on $Q^\omega$ induced by the run,
individual branches do not matter.

\begin{definition}[Positional run]
A run is \emph{positional} if
whenever the states of two nodes coincide
then the states of their left children coincide
and
the states of their right children coincide.
\end{definition}

\begin{theorem}\label{theo:probanp}
If a $\Qall$-trivial automaton
has an accepting run, then it has a positional accepting run.
Emptiness of $\Qall$-trivial automata can be decided in polynomial time.
\end{theorem}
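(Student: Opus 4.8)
The plan is to read the two surviving acceptance conditions through the lens of a finite Markov decision process. Since $\Qall=Q$, acceptance of a run depends only on the probability measure it induces on $Q^\omega$, so I regard a run as a strategy in the MDP $M$ whose states are $Q$, whose moves out of a state $q$ are the transitions $(q,a,r_0,r_1)\in\delta$, and where nature then goes left to $r_0$ or right to $r_1$ with probability $\tfrac12$ each; a \emph{positional} run is exactly a memoryless strategy in $M$. Under this dictionary the almost-surely accepting condition becomes the almost-sure satisfaction of a parity condition: ordering $Q$ as $q_1<\dots<q_n$ and assigning $q_i$ the priority $2i$ if $q_i\in\Qone$ and $2i-1$ otherwise makes the maximal priority seen infinitely often even precisely when the limsup state lies in $\Qone$. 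The nonzero accepting condition becomes the purely local requirement that every reachable state of $\Qnonzero$ have positive probability of never leaving $\Qnonzero$. I will use two standard facts about MDPs: almost-sure parity and positive-probability reachability (or safety) are decidable in polynomial time, are witnessed by memoryless strategies, and their winning regions each admit a single memoryless strategy that wins from every winning state while confining the play to that region.

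I then isolate three polynomial-time computable sets of states. Let $W\subseteq\Qnonzero$ be the states from which a memoryless run staying inside $\Qnonzero$ is almost-surely accepting (an almost-sure parity instance restricted to $\Qnonzero$), let $A\subseteq\Qnonzero$ be the positive-probability attractor of $W$ inside $\Qnonzero$, and let $Z$ be the states from which a memoryless run inside $Q\setminus(\Qnonzero\setminus A)$ is almost-surely accepting. The first half of the argument is the claim that the root of \emph{every} accepting run lies in $Z$. The key step is that every state of $\Qnonzero$ occurring in an accepting run in fact lies in $A$: at such a node the nonzero condition gives positive probability of staying in $\Qnonzero$, and intersecting with the almost-sure condition yields positive probability of staying in $\Qnonzero$ \emph{and} having limsup in $\Qone$; applying the inner regularity of the coin-flipping measure (Lemma~\ref{lem:approx}) produces a positive-probability leaf-free subtree all of whose branches stay in $\Qnonzero$ with limsup in $\Qone$, from which one reads off that the node's state belongs to the attractor $A$ of $W$. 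Consequently an accepting run never visits $\Qnonzero\setminus A$ and is almost-surely accepting, so its root lies in $Z$. Emptiness is therefore equivalent to the initial (maximal) state lying in $Z$, and since $W$, $A$ and $Z$ are each computable in polynomial time this yields the polynomial-time decision procedure.

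For the second half I construct, from any state of $Z$, a positional accepting run, which simultaneously gives the reduction of an arbitrary accepting run to a positional one. The strategy plays the almost-surely accepting witness for $Z$ on states of $Z\setminus\Qnonzero$, plays the witness for $W$ inside $W$, and on states of $A$ plays attractor moves that strictly approach $W$. On the resulting combined memoryless strategy every visited state of $\Qnonzero$ lies in $A$ and, by following the attractor moves and then the $W$-witness, reaches and then remains in $\Qnonzero$ with positive probability, so the nonzero condition holds at every $\Qnonzero$-node; meanwhile the almost-sure parity condition is inherited from the three component strategies. The main obstacle is exactly the compatibility of these components: forcing an attractor move at a state of $A$ (needed for the positive-probability nonzero requirement) must not destroy almost-sure acceptance, which forces the ``escape'' child of each attractor move to land back in $Z$. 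Guaranteeing this requires defining $A$ and $Z$ as mutually compatible fixpoints rather than independently, and verifying that the single memoryless strategy so obtained achieves both the almost-sure parity objective and the whole family of local positive-probability constraints at once; this interplay between an almost-sure objective and positive-probability side-conditions is where the real work lies.
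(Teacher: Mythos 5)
Your high-level framing is right and even matches an early draft of the argument: view a run as a strategy in the MDP with states $Q$, note that for $\Qall$-trivial automata acceptance depends only on the induced measure on $Q^\omega$, and try to assemble a positional run from memoryless witnesses for an almost-sure parity objective and a family of positive-probability safety objectives. But the proof is not actually completed at its hardest point, and you say so yourself. The construction of the positional run plays, at a state $q\in A\setminus W$, an ``attractor move'' $(q,a,r_0,r_1)$ whose successor $r_0$ (say) decrements the distance to $W$ inside $\Qnonzero$; nothing in your definition of $A$ (a positive-probability attractor computed purely inside $\Qnonzero$) constrains the other successor $r_1$. If $r_1\notin Z$, the walk escapes with probability $\tfrac12$ into states from which the combined memoryless strategy need not be almost-surely accepting, and acceptance is lost. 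Your closing sentences --- that one must redefine $A$ and $Z$ as ``mutually compatible fixpoints'' and that ``this interplay \dots is where the real work lies'' --- name the obstruction without resolving it; as written this is a genuine gap, not a detail. A second, smaller gap is the step ``from which one reads off that the node's state belongs to the attractor $A$ of $W$'': to get from a positive-probability leaf-free $\Qnonzero$-subtree with all limsups in $\Qone$ to the existence of a reachable state of $W$, you need an end-component (or equivalent) analysis showing that almost every branch of that subtree eventually inhabits a strongly connected set of transitions inside $\Qnonzero$ whose maximum is in $\Qone$, and that such a set yields a memoryless almost-surely accepting run confined to $\Qnonzero$. This is provable but is asserted rather than argued.

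The paper takes a different route precisely to make the compatibility you are missing come for free. It defines an \emph{acceptance witness}: a set of transitions $D$ whose graph has no dead-end, whose every BSCC has maximum in $\Qone$ and is either contained in or disjoint from $\Qnonzero$, and from whose every $\Qnonzero$-state there is a $\Qnonzero$-path to a BSCC inside $\Qnonzero$. Lemma~\ref{lem:witness} extracts such a witness from an arbitrary accepting run using end-components: the unions $D_0$, $D_1$ of end-components inside and outside $\Qnonzero$, together with ``progressive'' edges that decrement the distance to them, are all transitions genuinely used in the run, so the escape successor of every added edge automatically leads back to one of the good BSCCs --- there is no independent-fixpoint mismatch to repair. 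Lemma~\ref{lem:bscc} then trims the witness to one transition per state by an induction that preserves all four properties, yielding the positional run. The polynomial-time bound is obtained separately, by precomputing (via standard MDP algorithms) the states admitting almost-surely accepting runs confined to $\Qnonzero$ or to $Q\setminus\Qnonzero$, making them absorbing, and computing a greatest set of transitions satisfying three closure-under-union reachability conditions. If you want to salvage your version, you would need to prove that every state visited by an accepting run lies in $Z$, restrict the whole construction (including the attractor computation) to the sub-MDP on $Z$, and verify that attractor transitions can be chosen with \emph{both} successors in $Z$; that is essentially the content the paper packages into Lemmas~\ref{lem:path} and~\ref{lem:witness}.
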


The proof of this theorem relies
on the notion of acceptance witnesses.

\begin{definition}[Transition graph and acceptance  witness]
Let $D$ be a set of transitions.

The transition graph of $D$, denoted $G_D$, is the directed graph
whose vertices are all states appearing in one of the transitions in $D$,
denoted $Q_D$,
and
whose edges are induced by the transitions in $D$: for every
$(q,a,l,r)\in D$ both $(q,l)$ and $(q,r)$ are edges of $G_D$.

The set $D$
is an \emph{acceptance witness} if
it satisfies the four following conditions:
\begin{itemize}
\item[i)]
$Q_D$ contains the initial state of the automaton and $G_D$ has no dead-end,
\item[ii)]
the maximum of every bottom strongly connected component (BSCC) of $G_D$
is in $\Qone$,
\item[iii)]
every BSCC of $G_D$ is either contained in $\Qnonzero$
or does not intersect $\Qnonzero$,
\item[iv)]
from every state in $\Qnonzero\cap Q_D$ there 
is a path in $\Qnonzero\cap Q_D$ to a BSCC contained in $\Qnonzero$.
\end{itemize}
\end{definition}

\begin{lemma}\label{lem:bscc}
If a $\Qall$-trivial automaton has an acceptance witness,
it has a positional accepting run.
\end{lemma}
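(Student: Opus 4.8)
The plan is to read the witness $D$ as a recipe for a positional strategy: a choice of exactly one transition $\tau(q)\in D$ for every $q\in Q_D$. Such a strategy unfolds into a positional run (label the node carrying state $q$ with the letter of $\tau(q)$, and send its children to the two successor states of $\tau(q)$), and it induces a Markov chain on $Q$ in which, from a state $q$ with $\tau(q)=(q,a,q_0,q_1)$, the chain moves to $q_0$ or to $q_1$, each with probability $\tfrac12$. Under this reading a random branch of the run is a trajectory of the chain started in the initial state, and the limsup of a branch is almost surely the maximum of the recurrent class in which the trajectory is absorbed. Since $\Qall=Q$ the surely accepting condition is vacuous, so it suffices to choose $\tau$ so that (A) every recurrent class reachable from the initial state has its maximum in $\Qone$, which yields the almost-surely accepting condition, and (B) from every state of $\Qnonzero\cap Q_D$ the chain reaches, through $\Qnonzero$-states only, a recurrent class contained in $\Qnonzero$, which together with the fact that such a class is never left yields the nonzero accepting condition.

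First I would fix $\tau$ inside the bottom SCCs. Let $C$ be a BSCC of $G_D$ and $m=\max C$; by condition~(ii), $m\in\Qone$. Because $C$ is a bottom SCC, both children of every $D$-transition out of a state of $C$ stay in $C$, so any choice keeps the chain inside $C$. For each $q\in C\setminus\{m\}$ I choose a transition one of whose children is strictly closer to $m$ in $G_D\restrict C$ (possible since $C$ is strongly connected), and for $m$ any transition. Then $m$ is reachable from every state of $C$ and returns to itself, so the states reachable from $m$ form the unique recurrent class of $\tau\restrict C$, whose maximum is $m\in\Qone$. If moreover $C$ meets $\Qnonzero$, then $C\subseteq\Qnonzero$ by condition~(iii), so this recurrent class lies in $\Qnonzero$.

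Next I would route the transient states (those in no BSCC) towards the bottom. If $q\notin\Qnonzero$ I pick a transition one of whose children is strictly closer to $\bigcup(\text{BSCCs})$ in $G_D$; such a child exists because $G_D$ has no dead-end (condition~(i)), so every maximal path reaches a bottom SCC. If $q\in\Qnonzero$, condition~(iv) provides a path inside $\Qnonzero\cap Q_D$ from $q$ to a BSCC contained in $\Qnonzero$; writing $d(q)$ for the length of the shortest such path, I pick a transition one of whose children $q'$ satisfies $q'\in\Qnonzero$ and $d(q')=d(q)-1$. In both cases one designated child strictly decreases a potential vanishing exactly on the bottom SCCs, so no set of transient states is closed under the chain: a recurrent class meeting the transient part would, by following the designated children, have to contain a state of some BSCC and hence coincide with that BSCC, which is impossible. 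Thus the recurrent classes of $\tau$ are exactly the classes built inside the BSCCs, and (A) follows from condition~(ii).

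Finally I would verify (B). For $q\in\Qnonzero\cap Q_D$ lying in a BSCC, that BSCC is contained in $\Qnonzero$ by~(iii) and is never left, so the chain stays in $\Qnonzero$ forever with probability $1$; otherwise $q$ is transient with $d(q)<\infty$, and following the designated $d$-decreasing children reaches, with probability at least $(\tfrac12)^{d(q)}>0$ and through $\Qnonzero$ only, a $\Qnonzero$-BSCC where the chain then remains in $\Qnonzero$. The initial state belongs to $Q_D$ by~(i), so the unfolding of $\tau$ is a genuine positional run, and by the above it is almost-surely and nonzero accepting, hence accepting. The step I would be most careful about is the interaction of the two routing requirements for transient $\Qnonzero$-states: the single designated child must simultaneously make probabilistic progress towards a $\Qnonzero$-BSCC and forbid the formation of a spurious recurrent class, and it is exactly the monotone potential $d$ that reconciles these demands.
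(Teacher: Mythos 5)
Your proof is correct and follows essentially the same approach as the paper's: both select, for each state of $Q_D$, one transition that makes progress along a shortest path toward the maximum of a BSCC (staying inside $\Qnonzero$ when condition iv requires it), and then analyse the recurrent classes of the finite Markov chain induced by the resulting positional run. The only difference is organisational — the paper packages the selection as an induction that prunes $D$ down to a witness with a unique transition per state, whereas you define the selection directly in one pass.
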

\begin{proof}
The proof is by induction on $N_D= |D|-|Q_D|$.
Since $G_D$ has no dead-end, every state in $Q_D$ is the source
of a transition in $D$ thus  $N_D\geq 0$.

If $N_D=0$ then for every state $q\in Q_D$
there is a unique transition $\delta_q=(q,a_q,l_q,r_q)$.
Let $\rho$ be the positional run whose root has the initial state
and every node with vertex $q\in Q_D$ has children $l_q$ and $r_q$,
which is well-defined according to property i).
We show that $\rho$ is an accepting run.
The graph $G_D$ can be seen as a Markov chain,
with probability either $1$ or $\frac{1}{2}$ on every edge, depending on the outdegree.
The probability measure on $Q_D^\omega$
produced by the random walk on $\rho$ coincide with the probability measure
on $Q_D^\omega$ produced by this finite Markov chain:
indeed both measures coincide on finite cylinders $q_0\cdots q_nQ_D^\omega$.
Basic theory of finite homogenous Markov chain implies
that almost-surely every branch of the run ends up in one of the BSCCs
of $G_D$ and visits all its states infinitely often.
Thus property ii) ensures that the run $\rho$ is almost-surely accepting.
Properties iii) and iv) guarantee that
the run is moreover nonzero-accepting.

Assume now that $N_D>0$.
We show that there is a strictly smaller acceptance witness $D'\subsetneq D$.
Let $q\in Q_D$ which is the source of several transitions in $D$,
then $D'$ is obtained by removing from $D$ all these transitions except one.
To choose which transition $\delta$ to keep,
we pick up the shortest path $q=q_0\ldots q_n$ in $G_D$ 
of length $\geq 1$ which leads to the maximal state of one of the BSCCs of $G_D$.
Moreover if $q\in\Qnonzero$ we require the whole path to stay in $\Qnonzero$.
By definition of $G_D$ there is at least one transition in $ D$
whose origin is $q$ and one of the two successors is $q_1$.
To get $D'$ we delete all other transitions with source $q$ from $D$.

Clearly property i) is preserved by this operation.
To address properties ii)-iv),
we show that
every BSCC $B'$ of $G_{D'}$ is either a BSCC of $G_D$
or contained in the BSCC $B$ of $G_D$ whose maximum is
$q_n$, in which case $\max B=\max B'=q_n$.
There are two cases.
If $B'$ does not contain $q_n$ then it does not contain $q$ either
(because $q=q_0\ldots q_n$ is still a path in $G_{D'}$).
Since the only difference between $G_D$ and $G_{D'}$ are the outgoing transitions
from $q$ then $B'$ is actually a BSCC of $G_D$.
If $B'$ contains $q_n$ then $B'\subseteq B$
(because there are less edges in $G_{D'}$ than in $G_D$) and since 
$q_n=\max B$ then $\max B =\max B'$.

As a consequence property ii) and iii) are preserved.
And property iv) is preserved as well: in case $q\not\in\Qnonzero$
then there is nothing to prove and in case $q\in \Qnonzero$
then $q=q_0\ldots q_n$ is still a path in $G_{D'}$,
with all vertices in $\Qnonzero$. Moreover the set of vertices from which $q_n$ is 
accessible is the same in $G_D$ and $G_{D'}$
thus $q_n$ is in a BSCC of $G_{D'}$.
\end{proof}

A strong version of the converse implication of Lemma~\ref{lem:bscc} holds:
\begin{lemma}\label{lem:witness}
If a $\Qall$-trivial automaton has an accepting run,
it has an acceptance witness.
\end{lemma}
\begin{proof}
We fix an accepting run $\rho$
on some input tree $t$.
To extract an acceptance witness from $\rho$,
we make use of the notion of end-component
introduced in~\cite{deAlfaro:1997}.
\begin{definition}[End-component]
The \emph{transition of a node} $v$ 
is  $d(v)=(\rho(v),t(v),\rho(v0),\rho(v1))$.
For every branch $b$,
we denote $\Delta^\infty(b)$
the set of transitions labelling infinitely many nodes of the branch.
For every subset $D\subseteq \Delta$ we denote $B_D$ the set of branches $b$
such that $\Delta^\infty(b)= D$.
A set of transitions $D\subseteq \Delta$ is an \emph{end-component}
of the run if $B_D$ has nonzero probability.
\end{definition}

Call a branch $b$ \emph{even} if for every transition
$\delta=(q,a,l,r)\in\Delta^\infty(b)$,
not only the state $q$ but also the states
$l$ and $\rho$ appear infinitely often on the branch in the run $\rho$.
Almost-surely every branch is even,
because each time a branch visits
a node with transition $\delta$ it proceeds left or right with equal probability
$\frac{1}{2}$.
As a consequence,
\begin{lemma}\label{lem:endc}
Let $D$ be an end-component of the run.
Then the transition graph of $D$ has no dead-end,
is strongly connected and its maximal state is in $\Qone$.
\end{lemma}
\begin{proof}
Denote $G_D$ the transition graph of $D$, with states $Q_D$.
Since $D$ is an end-component then $B_{D}$ has non-zero probability,
and since almost every branch is even
then $B_D$ contains at least one even branch $b$.
The set of states appearing infinitely often on $b$ is exactly $Q_D$.
By removing a prefix long enough of $b$ so that
only states in $Q_D$ occur on the remaining suffix then one obtains 
a path in $G_D$ which visits every state in $Q_D$ infinitely often.
Thus $G_D$ has no dead-end and is strongly connected.
Moreover every even branch in $B_D$ has limsup
$\max Q_D$ and since
the run is almost-surely accepting then $\max Q_D\in \Qone$.
\end{proof}

Let $\mathcal{D}$ be the collection of all end-components of the run $\rho$.
We define two subsets of $\mathcal{D}$,
denoted respectively $\mathcal{D}_0$ and $\mathcal{D}_1$,
which collect  the end-components
whose states are respectively included in $\Qnonzero$
and disjoint from $\Qnonzero$.
Let $D_0\subseteq \Delta$ (resp. $D_1\subseteq \Delta$) be the union of all end-components
in $\mathcal{D}_0$ (resp. in $\mathcal{D}_1$).
These transitions 
are easy to reach:

\begin{lemma}\label{lem:path}
Every node $v$ 
has a descendant $w$ whose transition belongs to $D_0\cup D_1$.
Moreover if the state of $v$ is in $\Qnonzero$
then $w$ can be chosen such that the path $v$ to $w$ is labelled by $\Qnonzero$
and the transition is in $D_0$.
\end{lemma}
\begin{proof}
Let $v$ be a node
and $S_v$ the set of branches which visit $v$
and, in case $v$ is labelled by $\Qnonzero$, visit
only $\Qnonzero$-labelled nodes below $v$.
Since the run is accepting then $S_v$ has positive probability.
By definition of end-components,
almost-every branch is in $\bigcup_{D\in\mathcal{D} } B_D$.
Thus there exists an end-component $D$ such that 
 $B_D \cap S_v$ has positive probability.
 As a consequence, $v$ has a descendant  $w$
 whose transition is in $D$.
 %
%
%
%
Since
almost-every branch is even
and $B_D\cap S_v$ has positive probability then
there is at least one branch in $B_D\cap S_v$ which visits
infinitely often all states appearing in $Q_D$.
In case $v$ is labelled by $\Qnonzero$,
this implies that $Q_D\subseteq \Qnonzero$ thus $D\in\mathcal{D}_0$,
and terminates the proof of the second statement.
In case $v$ has no descendant labelled by $\Qnonzero$
this implies that $Q_D\cap \Qnonzero=\emptyset$ thus $D\in\mathcal{D}_1$,
and the first statement holds in this case.
In the remaining case, $v$ has a descendant $v'$ labelled with $\Qnonzero$,
which itself has a descendant $w$ whose transition belongs to some $D\in \mathcal{D}_0$,
thus the first statement holds for $v$.
\end{proof}
We terminate the proof of Lemma~\ref{lem:witness}.
Let $G_0$ (resp. $G_1$) the transition graph of $D_0$ (resp. $D_1$)
and denote $Q_0$ (resp $Q_1$) the set of 
states of $G_0$ (resp. $G_1$).

Let $D$ be the set of all transitions appearing in the run.
According to Lemma~\ref{lem:path},
in the transition graph $G_D$,
$Q_0\cup Q_1$ is accessible from every state $q \in Q_D$
and moreover $Q_0$ is accessible from every state $q \in Q_D \cap \Qnonzero$
following a path in $Q_D \cap \Qnonzero$.

We say that an edge $(q,r)$ of $G_D$ is \emph{progressive} if $q\not\in Q_0\cup Q_1$
and either ($q\in\Qnonzero$
and $r\in\Qnonzero$ and $(q,r)$ decrements the distance to $Q_0$ in $G_D$)
or  ($q\not\in\Qnonzero$ and $(q,r)$ decrements the distance to $Q_0\cup Q_1$ in $G_D$).
Every state in $Q_D\setminus (Q_0\cup Q_1)$ is the source of at least one progressive edge.

We denote $D_+$ the union of $D_0$ and $D_1$ plus all the transitions $\delta=(q,a,r_0,r_1)\in D$
such that either $(q,r_0)$ or $(q,r_1)$ is progressive.
Then $D_+$ has all four properties of Lemma~\ref{lem:bscc}.
Denote $G_+$ the transition graph associated to $D_+$.
Property i) holds because
every state in $Q_D$, including the initial state, is either in 
$Q_0\cup Q_1$ or is the source of a progressive edge.

Remark that the BSCCs of $G_+$ are exactly the BSCCs of $G_0$ and $G_1$.
Since both $G_0$ and $G_1$ are unions of strongly connected graphs,
they are equal to the union of their BSCCs.
The BSCCs of $G_0$ and $G_1$ are still BSCCs in $G_+$
because no edges are added inside them
(progressive edges have their source outside $G_0$ and $G_1$).
Following the progressive edges leads to $G_0$ or $G_1$
from every state in $G_+$,
thus there are no other BSCCs in $G_+$.

This implies property ii)  because,
according to Lemma~\ref{lem:endc},
both graphs $G_0$ and $G_1$ are the union of strongly connected graphs
whose maximal states are in $\Qone$.
This also implies property iii) since $Q_0\subseteq \Qnonzero$
and $Q_1 \cap \Qnonzero=\emptyset$.
Property iv) is obvious for states in $Q_0$
because $Q_0$ is a union of BSCCs included in $\Qnonzero$.
Property iv) holds as well for states in $(Q_D\cap \Qnonzero) \setminus Q_0$,
the path to $Q_0$ is obtained following the progressive edges in 
$\Qnonzero\times \Qnonzero$.
\end{proof}

\begin{proof}[Proof of Theorem~\ref{theo:probanp}]
According to Lemma~\ref{lem:witness} and Lemma~\ref{lem:bscc},
non-emptiness of a $\Qall$-trivial automaton is equivalent
to the existence
of an acceptance witness,
which implies the existence of a positional accepting run.
Guessing a subset of transitions and checking it is an acceptance witness can be done in 
non-deterministic polynomial time.

Actually it is possible to check the existence of an acceptance witness in polynomial time. Using standard algorithms for Markov decision processes, one can compute the set 
$R_0$ (resp. $R_1$) of states $q$ such that there exists an almost-surely accepting run with root state $q$
and whose states are labelled by  $\Qnonzero$ (resp. by  $Q\setminus \Qnonzero$) (see~\cite[Corollary 18]{DBLP:journals/tocl/CarayolHS14} for more details).

We transform the $\Qall$-trivial automaton $\mathcal{A}$
into another $\Qall$-trivial automaton $\mathcal{A}'$
as follows. In $\mathcal{A}'$ every state $q$ in 
$R_0\cup R_1$ is turned into an absorbing state:
for every letter $a$ there is a transition $(q,a,q,q)$ and no other transition with source $q$.
Moreover we change the almost-sure condition and set it equal to $R_0\cup R_1$. The positive condition is not modified.

We claim that $\mathcal{A}$ has an accepting run if and only if $\mathcal{A}'$ has one.
Assume $\mathcal{A}$ has an accepting run.
Then it has an acceptance witness $D$.
According to ii), 
every BSCC $B$ of $D$ is included either in $R_0$ (if $B\subseteq \Qnonzero$) or in $R_1$ (if $B\cap \Qnonzero=\emptyset$). Thus $D$ can be turned into an  acceptance witness of $\mathcal{A}'$ by exchanging any transition $(q,a,l,r)$ with $q\in R_0\cup R_1$ into the absorbing transition $(q,a,q,q)$.
Conversely, assume 
 $\mathcal{A}'$ has an accepting run $\rho'$.
 Then by definition of $R_0$ and $R_1$
 every state $q\in R_0 \cup R_1$ is an acceptance witness of some almost-surely accepting run $\rho_q$ with root $q$ and all nodes in $\Qnonzero$ or out of $\Qnonzero$.
 Then we can build an accepting run of $\mathcal{A}$
 by modifying $\rho'$ as follows: for every node labelled by $q\in R_0\cup R_1$ with no ancestor labelled
by $R_0\cup R_1$ we replace the subtree by $\rho_q$.
Since almost-surely every path reaches $R_0\cup R_1$
then the new run is almost-surely and positively accepting.

The criteria for $D$ to be an acceptance witness of $\mathcal{A'}$ are simple:
$Q_D$ should contain the initial state and moreover:
\begin{itemize}
\item[a)]
$G_D$ has no dead-end,
\item[b)]
from every state in $Q_D$ there is a path in $Q_D$
to $R_1 \cup R_0$.
\item[c)]
from every state in $\Qnonzero\cap Q_D$ there 
is a path in $\Qnonzero\cap Q_D$ to $R_0$.
\end{itemize}
Notice that properties a), b) and c) are closed by union:
if both $D_1$ and $D_2$ have these three properties then $D_1\cup D_2$ as well.
And the largest set of transitions $D_{\max}$ with properties a) b) and c) is easy to compute in polynomial time:
start with $D_{\max}$ equal to all transitions and as long as possible remove:
\begin{itemize}
\item
any transition leading to a dead-end,
\item
 all transitions inside a BSSC disjoint from $R_0$ and $R_1$,
 \item
 all transitions $(q,a,q_0,q_1)$ such that
  $q\in\Qnonzero$
 and $R_0$ is not reachable from $q$ by a path in $\Qnonzero\cap Q_{D_{\max}}$.
 \end{itemize}
An invariant of this process is that all transitions of any acceptance witness are preserved. 
Finally, $\mathcal{A'}$ has an accepting run if and only if $Q_{D_{\max}}$ contains the initial state.
\end{proof}

\section{Emptiness of nonzero automata is in \npconp}
\label{sec:jumping}

In this section we show how to decide the emptiness of nonzero automata.
The main ingredient are jumping games.

Call  a run \emph{$\set{\Qone,\Qnonzero}$-accepting}
if it satisfies the almost-surely and the nonzero acceptance condition, but it does not necessarily satisfy the
surely accepting condition, and the root may not be labelled by the initial state either.

\paragraph*{The jumping game.}

For a run $\rho$, define its \emph{profile} $\Pi$ to be following set of state pairs:
\begin{multline*}
\Pi=\{(q,m) : \mbox{some non-root node in $\rho$ has state $q$}\\
\mbox{ and $m$ is the maximal state of its strict ancestors}\}\enspace.
\end{multline*}

The \emph{jumping game}  is a parity game
played by two players, \emph{Automaton} and \emph{Pathfinder}.  
Positions of  Automaton are states of the automaton and positions of  Pathfinder are profiles of  $\set{\Qone,\Qnonzero}$-accepting runs. The game is an edge-labelled parity game, i.e.~the priorities are written on the edges. The edges originating in Automaton positions are of the form
\begin{align*}
  q \stackrel q \to \Pi \qquad \mbox{such that $\Pi$ is the profile of some $\set{\Qone,\Qnonzero}$-accepting run with root state $q$.}
\end{align*}
The edges originating in Pathfinder positions are of the form
\begin{align*}
 \Pi \stackrel m \to q \qquad \mbox{such that $(q,m) \in \Pi$}.
\end{align*}
We say that  Automaton wins the jumping game if he has a winning strategy from the position which is the initial state of the automaton.  If the play ever reaches a dead-end, i.e. a state which is not the root of any $\set{\Qone,\Qnonzero}$-accepting run,  then the game is over and Automaton loses. Otherwise Automaton wins iff
the limsup of the states is in $\Qall$.

Lemmas~\ref{lem:jumping-game-iff} and~\ref{lem:jumping-game-decide} below establish that nonemptiness
of a nonzero automaton is equivalent to Automaton winning the jumping game, and this can be decided in  {\sc np}.

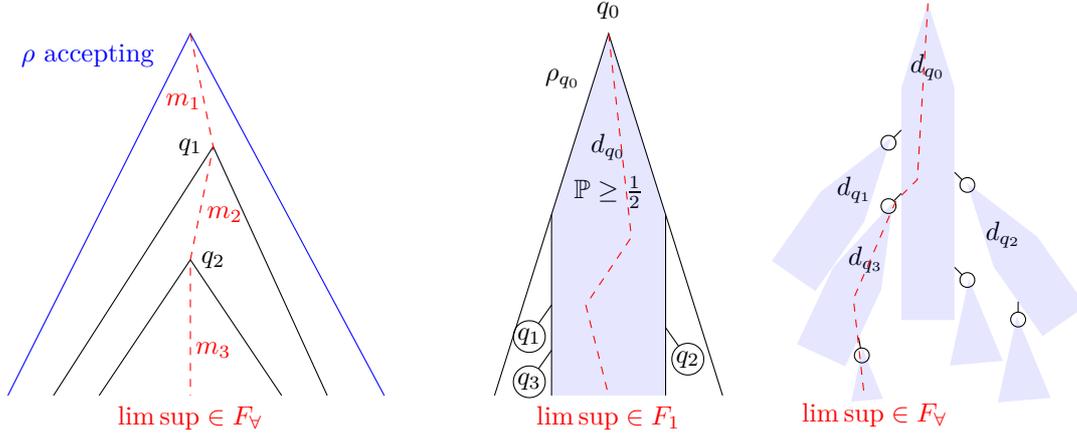
\begin{figure}[ht]
\begin{tikzpicture}
\begin{scope}[scale=.3]
\node[blue] at (5.5,14) { $\rho$ accepting};
\draw[blue] (2,-1) -- (10,15) -- (18.5,-1) ;
\node[red] at (9.7,12) {$m_1$};
\node at (10,10) {$q_1$};
\draw (4,-1) -- (11,10) -- (16,-1) ;
\node[red] at (11.5,7) {$m_2$};
\node at (11,5) {$q_2$};
\node[red] at (11,1) {$m_3$};
\draw (6,-1) -- (10,5) -- (14,-1) ;
\draw[red,dashed] (10,15) -- (11,10) -- (10,5) -- (10,-1);
\node[red] at (10,-2) {$\limsup \in \Qall$};
  \end{scope}

\begin{scope}[shift={(7,0)}, scale=.3]
\node at (5,16) {${q_0}$};
\node at (3,13) { $\rho_{q_0}$};
\draw[black] (0,-1) -- (5,15) -- (10,-1) ;
\draw[black] (2.5,-1) -- (2.5,7);
\draw[black] (7.5,-1) -- (7.5,7);
\node at (5,10) {$d_{q_0}$};
\node at (5,8) {$\mathbb{P}\geq \frac{1}{2}$};
\fill [opacity=0.1,blue] (2.5,-1) -- (2.5,7) -- (5,15) -- (7.5,7) -- (7.5,-1) -- cycle;
\draw[red,dashed] (5,15) -- (6,6) -- (4,3) -- (5,-1);
\node[red] at (5,-2) {$\limsup\in \Qone$};

\begin{scope}[shift={(2.5,1)}]
\begin{scope}[rotate around={-35:(0,2)}]
\draw (0,2) -- (0,1);
\draw (0,0.3) circle (0.7);
\node at (0,0.3) {$q_1$};
  \end{scope}
  \end{scope}
\begin{scope}[shift={(7.5,0)}]
\begin{scope}[rotate around={35:(0,2)}]
\draw (0,2) -- (0,1);
\draw (0,0.3) circle (0.7);
\node at (0,0.3) {$q_2$};
  \end{scope}
  \end{scope}
\begin{scope}[shift={(2.5,-1)}]
\begin{scope}[rotate around={-35:(0,2)}]
\draw (0,2) -- (0,1);
\draw (0,0.3) circle (0.7);
\node at (0,0.3) {$q_3$};
  \end{scope}
  \end{scope}

  \end{scope}


\begin{scope}[shift={(12,0.7)},scale=.14]

\begin{scope}[shift={(0,15)}]
\node at (5,9) {$d_{q_0}$};
\fill [opacity=0.1,blue] (2.5,-15) -- (2.5,7) -- (5,15) -- (7.5,7) -- (7.5,-15) -- cycle;
\begin{scope}[shift={(2.5,1)}]
\begin{scope}[rotate around={-45:(0,2)}]
\draw (0,2) -- (0,1);
\draw (0,0.3) circle (0.7);
  \end{scope}
  \end{scope}
  
\begin{scope}[shift={(7.5,-3)}]
\begin{scope}[rotate around={45:(0,2)}]
\draw (0,2) -- (0,1);
\draw (0,0.3) circle (0.7);
  \end{scope}
  \end{scope}
  
\begin{scope}[shift={(2.5,-5)}]
\begin{scope}[rotate around={-45:(0,2)}]
\draw (0,2) -- (0,1);
\draw (0,0.3) circle (0.7);
  \end{scope}
  \end{scope}

\begin{scope}[shift={(7.5,-12)}]
\begin{scope}[rotate around={45:(0,2)}]
\draw (0,2) -- (0,1);
\draw (0,0.3) circle (0.7);
  \end{scope}
  \end{scope}

  \end{scope}
  
  \begin{scope}[shift={(-10,7.5)}]
\begin{scope}[rotate around={-35:(0,2)}]
\node at (5,9) {$d_{q_1}$};
\fill [opacity=0.1,blue] (2.5,-1) -- (2.5,7) -- (5,15) -- (7.5,7) -- (7.5,-1) -- cycle;
  \end{scope}
  \end{scope}

  \begin{scope}[shift={(-8.3,-0.5)}]
\begin{scope}[rotate around={-25:(0,2)}]
\node at (5,9) {$d_{q_3}$};
\fill [opacity=0.1,blue] (2.5,-1) -- (2.5,7) -- (5,15) -- (7.5,7) -- (7.5,-1) -- cycle;
\begin{scope}[shift={(7.5,0)}]
\begin{scope}[rotate around={35:(0,2)}]
\draw (0,2) -- (0,1);
\draw (0,0.3) circle (0.7);
  \end{scope}
  \end{scope}
  \end{scope}
  \end{scope}

 \begin{scope}[shift={(12.0,-2.5)}]
\begin{scope}[rotate around={35:(0,2)}]
\node at (5,9) {$d_{q_2}$};
\fill [opacity=0.1,blue] (2.5,-1) -- (2.5,7) -- (5,15) -- (7.5,7) -- (7.5,-1) -- cycle;
\begin{scope}[shift={(2.5,1)}]
\begin{scope}[rotate around={-35:(0,2)}]
\draw (0,2) -- (0,1);
\draw (0,0.3) circle (0.7);
  \end{scope}
  \end{scope}
    \end{scope}
  \end{scope}
  
    \begin{scope}[shift={(5,-11.5)}]
\begin{scope}[rotate around={5:(0,2)}]
\fill [opacity=0.1,blue] (2.5,7) -- (5,15) -- (7.5,7)  -- cycle;
  \end{scope}
  \end{scope}

 \begin{scope}[shift={(9.5,-15)}]
\begin{scope}[rotate around={5:(0,2)}]
\fill [opacity=0.1,blue] (2.5,7) -- (5,15) -- (7.5,7)  -- cycle;
  \end{scope}
  \end{scope}

 \begin{scope}[shift={(-5,-18)}]
\begin{scope}[rotate around={5:(0,2)}]
\fill [opacity=0.1,blue] (3.5,10) -- (5,15) -- (6.5,10)  -- cycle;
  \end{scope}
  \end{scope}

\draw[red,dashed] (5,30) -- (4,13.3) -- (2,11.3) -- (-2,2) -- (-1,-7) ;
  \node[red] at (0,-9) {$\limsup\in \Qall$};

  \end{scope}

\end{tikzpicture}

\caption{The left picture illustrates how an accepting run is turned into a winning strategy for Automaton
in the jumping game, the two other pictures illustrate the converse transformation.}
\label{fig:jumping}
\end{figure}

\vspace{-0.3cm}

\begin{lemma}\label{lem:jumping-game-iff}
The automaton is nonempty if and only if  Automaton wins the jumping game.
\end{lemma}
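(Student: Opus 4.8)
The plan is to prove the two implications separately, following the two transformations drawn in Figure~\ref{fig:jumping}.

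For the easy direction, suppose the automaton has an accepting run $\rho$ rooted at the initial state. I would let Automaton play while secretly tracking a node $v$ of $\rho$, starting at the root. Whenever the play is at the Automaton position $q=\rho(v)$, he moves to the profile of the subtree of $\rho$ rooted at $v$; this is a legal edge because every subtree of an accepting run is itself $\set{\Qone,\Qnonzero}$-accepting and has the right root state, so $q$ is never a dead-end. When Pathfinder replies with a pair $(q',m)$ of that profile, Automaton advances $v$ to a strict descendant witnessing this pair. Along any resulting play the tracked nodes $v_0,v_1,\ldots$ descend strictly and define a branch $b$ of $\rho$; the priorities seen are $q_0,m_1,q_1,m_2,\ldots$, where $m_{i+1}$ is the largest state on the finite piece of $b$ running from $v_i$ to $v_{i+1}$ (excluding $v_{i+1}$), so that $q_i\le m_{i+1}$. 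Since these finite pieces partition $b$, the limsup of the priorities equals the limsup of $b$, which belongs to $\Qall$ by surely acceptance; hence Automaton wins.

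For the converse I would turn a winning strategy $\sigma$ into an accepting run by repeated splicing. The first move $\sigma(q_0)$ is the profile of a $\set{\Qone,\Qnonzero}$-accepting run $d_{q_0}$ with root the initial state $q_0$, which I place at the top of the construction. To repair the branches of $d_{q_0}$ whose limsup escapes $\Qall$, I would select an antichain of \emph{cut nodes}, each a non-root node, so that every branch avoiding all cut nodes has limsup in $\Qall$. A cut node with state $q$ and maximal strict ancestor $m$ realises a pair $(q,m)$ of the profile, hence a legal Pathfinder edge; at such a node I graft the run $d_q$ obtained by recursing on the continuation of $\sigma$ after that edge. Iterating produces an infinite run $R$, and I would check it is accepting: a branch meeting only finitely many cut nodes ends inside some grafted $d_q$ avoiding its cuts, so its limsup is in $\Qall$ by the choice of cuts and in $\Qone$ almost surely since $d_q$ is almost-surely accepting, while a branch meeting infinitely many cut nodes traces an infinite play of $\sigma$ and therefore has limsup equal to the limsup of the priorities, which lies in $\Qall$ because $\sigma$ wins.

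The main obstacle is the probabilistic bookkeeping needed to keep the almost-sure and nonzero conditions intact. For these I must guarantee that almost every branch meets only finitely many cut nodes. I would arrange this by choosing, via the inner regularity of Lemma~\ref{lem:approx}, the cut inside each $d_q$ so that the branches it captures have probability at most a small $\epsilon$, with $\epsilon$ shrinking geometrically along the recursion, so that the event ``infinitely many cuts'' is null by a Borel--Cantelli estimate; the nonzero witnesses then survive because they have positive probability and are disturbed only on a null set. Such a thin cut exists only when the branches of $d_q$ with limsup outside $\Qall$ already form a null set, so the delicate preliminary point---and the place where the interaction between $\Qone$ and $\Qall$ enters---is to reduce to the case $\Qone\subseteq\Qall$, after which almost every branch of a $\set{\Qone,\Qnonzero}$-accepting run has limsup in $\Qone\subseteq\Qall$ and the cuts can be made arbitrarily thin.
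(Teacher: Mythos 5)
Your first direction is essentially the paper's argument and is fine: play the profile of the subtree of $\rho$ rooted at the tracked node, let Pathfinder's choices trace a branch, and identify the limsup of the edge labels with the limsup of that branch. The converse is where you diverge. The paper does not cover the bad branches of $\rho_q$ by a small open cut; it uses Lemma~\ref{lem:approx} in the opposite, inner-regularity direction to \emph{keep} a leaf-free subtree $d_q\subseteq\rho_q$ of probability $\ge\frac{1}{2}$ \emph{all} of whose branches have limsup in $\Qone$, grafts $d_r$ at every missing child, and derives ``almost every branch eventually settles in some $d_q$'' from the fact that each entry into a graft survives with probability $\ge\frac{1}{2}$, rather than from a Borel--Cantelli estimate. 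One consequence of your dualization is that your thin cut exists only if the branches with limsup outside $\Qall$ already form a null set, which forces the reduction to $\Qone\subseteq\Qall$; but that normalization changes the family of $\set{\Qone,\Qnonzero}$-accepting runs and hence the jumping game itself, and the paper explicitly warns that, unlike for zero automata, this inclusion cannot be assumed for free for nonzero automata. The paper's kept subtree only needs its branches to have limsup in $\Qone$, so it never needs the bad set to be null.

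The genuine gap is your treatment of the nonzero condition. You assert that the witnesses ``are disturbed only on a null set,'' but they are disturbed on the cut, which has measure up to $\epsilon>0$, not measure zero. For a node $v$ of $d_q$ with state in $\Qnonzero$, the witness set $L_v$ (branches through $v$ staying in $\Qnonzero$ below $v$ with limsup in $\Qone$) has some positive probability $p_v$; there are infinitely many such $v$ and $\inf_v p_v$ may be $0$, so no fixed $\epsilon$ prevents the cut from swallowing some $L_v$ entirely, and after grafting the nonzero condition fails at that $v$. The paper's fix is exactly the step your sketch omits: apply Lemma~\ref{lem:approx} once more to each $L_v$ to obtain a leaf-free subtree $d'_v$ of positive probability whose every branch lies in $L_v$, and enlarge the kept region to $d_q\cup\bigcup_v d'_v$ so that cut nodes never fall inside a witness. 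Without this, or an equivalent device forcing your cuts to avoid a positive-measure closed piece of every $L_v$, the constructed run need not be nonzero accepting.
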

\begin{proof}
The proof transforms an accepting run $\rho$ of the nonzero automaton into a winning strategy $\sigma$ of Automaton, and back, this is illustrated by Fig.~\ref{fig:jumping}.

Assume first that the nonzero automaton has an accepting run $\rho$.
Automaton can win the jumping game by
playing profiles of runs obtained as subtrees of $\rho$ rooted 
at deeper and deeper depths.
For a start, Automaton plays the profile $\Pi_0$ of $\rho$.
Then Pathfinder chooses some pair $(q_1,m_1)\in\Pi_0$,
by definition of profiles this corresponds to some non-root node $v_1$ of $\rho$ 
labelled by $q_1$, $m_1$ is the maximal state of the strict ancestors of $v_1$.
At each step $n>0$, Pathfinder chooses a pair $(q_n,m_n)\in\Pi_n$
corresponding to some node $v_{n+1}$ whose $v_{n}$ is a strict ancestor,
then Automaton plays the profile $\Pi_{n+1}$ of the subtree $\rho_n$ rooted in $v_n$.
Since $\rho$ is accepting then \emph{a fortiori} $\rho_n$
is $\set{\Qone,\Qnonzero}$-accepting.
The nodes $v_1,v_2,\ldots$ and their ancestors form a branch of $\rho$,
%
whose limsup is in $\Qall$ because $\rho$ is surely accepting.
This limsup is equal to $\limsup_n m_n$ thus Automaton wins the play.

Conversely, we use a positional winning strategy of Automaton to build an accepting run
of the nonzero automaton.
Denote $W$ the set of states winning for Automaton.
With every state $q$ in $W$ we associate the profile $\Pi_q$ chosen by the positional winning strategy of Automaton
and a $\set{\Qone,\Qnonzero}$-accepting run $\rho_q$ with profile $\Pi_q$.

We show the existence of a leaf-free subtree $d_q$ of $\rho_q$ 
such that:
\begin{enumerate}
\item[a)]
the set of branches of $d_q$ has probability $\geq \frac{1}{2}$,
\item[b)]
\emph{every} branch of $d_q$ has limsup in $\Qone$,
\item[c)]
for every node $v$ of $d_q$ with state in $\Qnonzero$,
	the set of branches of $d_q$ which visit $v$ and visit only $\Qnonzero$-labelled nodes below $v$
	has nonzero probability.
\end{enumerate}

Since $\rho_q$ is almost-surely accepting, then according to Lemma~\ref{lem:approx},
there is a subtree $d_q$ of $\rho_q$
whose set of branches has probability $\geq \frac{1}{2}$ and \emph{all of them} have limsup in $\Qone$
(while in the run $\rho_q$ there may be a non-empty set of branches with limsup in $\Qall \setminus \Qone$,
with probability zero).
Since we are only interested in branches of $d_q$, we can assume that $d_q$ is leaf-free.
This guarantees properties a) and b) but not c).
For every node $v$, define
$L_v$ the set of branches that visit $v$,
have limsup in $\Qone$ and visit only $\Qnonzero$-labelled nodes below $v$.
Since $\rho_q$ is $\set{\Qone,\Qnonzero}$-accepting,
for every node $v$ of $\rho_q$ with state in $\Qnonzero$,
$L_v$ has nonzero probability and according 
to Lemma~\ref{lem:approx} again, there exists a leaf-free subtree $d'_v$
whose every branch belongs to $L_v$.
We extend the definition domain of $d_q$  with $d'_v$.
This preserves properties a) and b) (because $d'_v$ itself has property b))
and guarantees property c).

Now we combine together the partial runs $(d_q)_{q\in W}$
in order to get an infinite graph.
Since $d_q$ is leaf-free, every node can have either both children in $d_q$ or only one child in $d_q$.
In case one child is missing,
we plug instead the partial run $d_r$, where $r$ is the state of the missing child in $\rho$.
This is well-defined because $r\in W$:
in a parity game, all states visited when playing a wining strategy are winning,
and after Automaton plays the profile $\Pi_q$ 
the next state of the game maybe any state appearing in $\rho_q$, including $r$.

The unravelling of this infinite graph, starting from the initial state,
is an accepting run of the automaton.
Each time a branch enters a subtree $d_q$,
there is probability $\geq \frac{1}{2}$ to stay in $d_q$ forever.
Thus almost every branch of the unravelling
eventually stays in one of the subtrees $(d_q)_{q\in W}$,
thus has limsup in $\Qone$ according to property b).
As a consequence the unravelling is almost-surely accepting.

Still, with probability $0$, some branches switch infinitely often from a subtree to another. 
Such a branch enters the $n$-th subtree $d_n$ in its root state $q_n$,
follow a path in $d_n$ with maximal state $m_{n+1}$
and exits $d_n$ to enter $d_{n+1}$ in state $q_{n+1}$.
Since $d_n$ is a subtree of $\rho_n$, then $(q_{n+1},m_{n+1})$ is in the profile $\Pi_{q_n}$
and $q_0\stackrel {q_0} \to \Pi_{q_0} \stackrel {m_1} \to q_1 \to \Pi_{q_1} \stackrel {m_2}\ldots$
is a play consistent with the winning strategy of Automaton.
Since the strategy of Automaton is winning then
$\limsup_n m_{n+1}\in\Qall$.
Hence the unravelling is surely accepting.

Moreover the unravelling is nonzero accepting as well according to c).
\end{proof}

\begin{lemma}\label{lem:jumping-game-decide}
	Deciding whether  Automaton wins the jumping game is in \npconp.
\end{lemma}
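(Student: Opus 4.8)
The plan is to treat the jumping game as an ordinary parity game (finitely many priorities, namely the states of the automaton ordered by $\leq$, with the condition that the maximal priority seen infinitely often lies in $\Qall$, and with dead-ends counting as losses for Automaton) and to exploit the asymmetry between the two players. Automaton has only $|Q|$ positions, whereas Pathfinder has one position per profile, hence exponentially many. Since every profile is a subset of $Q\times Q$ and parity games are positionally determined for \emph{both} players, a positional Automaton strategy is a map $\sigma$ sending each state $q$ to a single profile $\Pi_q$, an object of polynomial size. This succinctness of Automaton's side is what drives the {\sc np} upper bound, and the decidability of the special case from Theorem~\ref{theo:probanp} is what lets us verify the guessed data in polynomial time.

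\textbf{Membership in {\sc np} (Automaton wins).} First I would guess, for each state $q$, a profile $\Pi_q\subseteq Q\times Q$ together with a certificate that $\Pi_q$ is the profile of some $\set{\Qone,\Qnonzero}$-accepting run rooted at $q$; checking this validity reduces to the emptiness machinery of Theorem~\ref{theo:probanp} (acceptance witnesses and positional runs), and here one must be careful that the profile be realised \emph{exactly}, possibly by a run assembled from positional pieces rather than a single positional run. Fixing $\sigma$ leaves a one-player game, controlled by Pathfinder, on the arena $Q$: its edges are $q\to q'$ with priority $m$ for each $(q',m)\in\Pi_q$. Note that the priority $q$ written on Automaton's edge never exceeds the priority $m$ of the following Pathfinder edge, since the root $q$ is a strict ancestor and hence $m\ge q$; so the $\limsup$ of a play is determined by the $m$-priorities. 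In a one-player parity game Pathfinder wins exactly when some cycle reachable from the initial state has maximal priority outside $\Qall$, which is decidable in polynomial time. Automaton wins the jumping game iff the guessed $\sigma$ survives this check, placing the problem in {\sc np}.

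\textbf{Membership in co-{\sc np} (Pathfinder wins), the main obstacle.} Guessing a positional \emph{Pathfinder} strategy directly is not affordable, since it would have to prescribe a choice for each of the exponentially many profiles. The plan is to show that a winning Pathfinder strategy can be taken in a uniform form described by polynomial data independent of the profile: maintain a winning region $S\subseteq Q$ with $q_{\mathrm{init}}\in S$ together with a ranking over $S$, and at any profile $\Pi$ select the pair $(q',m)\in\Pi$ with $q'\in S$ whose priority $m$ is largest, breaking ties by the ranking. One then guesses $S$ and the ranking and verifies that from every $q\in S$ this rule always finds a good pair, i.e.\ that Automaton has no valid profile at $q$ all of whose pairs escape $S$ or defeat the ranking. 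Although this is a universal statement over exponentially many profiles, each such obligation is precisely an emptiness question for a suitably restricted $\set{\Qone,\Qnonzero}$-automaton — does Automaton have a $\set{\Qone,\Qnonzero}$-accepting run rooted at $q$ whose non-root states avoid $S$ (or keep the relevant priorities dominated) — and is therefore decidable in polynomial time by Theorem~\ref{theo:probanp}. The delicate step, and the one I expect to require the most care, is discharging the parity bookkeeping of this uniform strategy correctly through polynomially many such $\Qall$-trivial emptiness checks; once that is in place, Pathfinder-wins lies in {\sc np}, and together with the previous paragraph the jumping game is decidable in \npconp.
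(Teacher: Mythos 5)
Your overall plan coincides with the paper's: the exponential blow-up sits entirely on Pathfinder's side, and the way out is to replace positional strategies by polynomial-size data, one object per automaton state, whose validity is checked through the polynomial emptiness test of Theorem~\ref{theo:probanp} applied to a product automaton that records the maximal state among strict ancestors of the current node. The paper formalizes this as a \emph{winning witness} $(W,s)$ with $s(q)\subseteq Q\times Q$: for Automaton one checks that \emph{some} realizable profile is \emph{contained in} $s(q)$ (nonemptiness of the product automaton restricted to $\{(q,\bot)\}\cup s(q)$) and that every sequence generated through the sets $s(q)$ has limsup in $\Qall$; for Pathfinder one checks that \emph{every} realizable profile \emph{meets} $t(q)$ (emptiness of the restriction to the complement of $t(q)$) and that no generated sequence has limsup in $\Qall$. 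Both side conditions on sequences are one-player parity games, solvable in polynomial time.

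Two points in your write-up are genuine gaps. First, in the {\sc np} direction you insist that the guessed $\Pi_q$ be realised \emph{exactly} as a profile. This is both unnecessary and problematic: Theorem~\ref{theo:probanp} lets you certify that some accepting run has profile \emph{contained in} $\Pi_q$ (restrict the product automaton to $\{(q,\bot)\}\cup\Pi_q$ and test nonemptiness), but exact realisation asks for a single run witnessing every pair of $\Pi_q$ simultaneously and is not delivered by that theorem. Fortunately a sub-profile only shrinks Pathfinder's options, so containment suffices for soundness, and completeness holds because the exact profiles chosen by a positional winning strategy are themselves valid guesses. Second, and more seriously, your co-{\sc np} witness is not actually constructed: the rule ``take the pair with largest $m$ landing in $S$, break ties by a ranking'' has no reason to keep the limsup outside $\Qall$ (for a general $\Qall$ a large $m$ may be exactly what Pathfinder must avoid), and you explicitly defer the ``parity bookkeeping''. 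The missing idea is that Pathfinder's condensed witness is simply a set $t(q)\subseteq Q\times Q$ per state, subject to (i) every realizable profile rooted at $q$ intersects $t(q)$ --- equivalently, the product automaton restricted to $\{(q,\bot)\}\cup\bigl((Q\times Q)\setminus t(q)\bigr)$ has no $\set{\Qone,\Qnonzero}$-accepting run, a polynomial-time check --- and (ii) no infinite sequence generated by moving inside the sets $t(q)$ has limsup in $\Qall$, a one-player parity game. No ranking or tie-breaking is needed; Pathfinder answers any profile $\Pi$ with an arbitrary element of $\Pi\cap t(q)$.
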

\begin{proof}
The jumping game is a parity game thus the winner of the jumping game can be found by guessing a positional strategy for either Automaton or Pathfinder and checking that this strategy is winning.
However, since there are exponentially many profiles,
this algorithm is in {\sc nexptime} $\cap$ co-{\sc nexptime} rather than in \npconp.

To overcome this difficulty, we use \emph{winning witnesses} which are condensed versions of winning positional strategies of Automaton and Pathfinder.
A winning witness is a pair $(W,s)$
with $W\subseteq Q$ and $s : W \to 2^{W \times W}$.

Under extra-conditions, some of these witnesses are proofs that Automaton or Pathfinder is the winner of the jumping game.
A sequence $m_0, m_1, \ldots \in W^*$ is \emph{generated} by
$(W,s)$ if there exists $q_0,q_1,\ldots \in W^*$
such that $\forall n, (q_{n+1},m_{n+1})\in \sigma(q_n)$.
For every state $q$, denote $\mathcal{R}_q$
the set of profiles 
of $\set{\Qone,\Qnonzero}$-accepting runs with root state $q$.
Then $(W,s)$ is a \emph{winning witness for Automaton} if:
\begin{itemize}
\item[$(\alpha$)]
Every sequence generated by $(W,s)$ has its limsup 
 in $\Qall$
 and $\forall q \in W, s(q)\neq \emptyset$.
\item[$(\beta$)]
For every state $q\in W$
there exists a profile $\Pi\in\mathcal{R}_q$
such that $\Pi \subseteq s(q)$.
\end{itemize}
And $(W,t)$ is a \emph{winning witness for Pathfinder} if:
\begin{itemize}
\item[$(\gamma$)]
No sequence generated by $(W,t)$ has its $\limsup$ in $\Qall$.
\item[$(\delta$)]
For every state $q\in W$ and every profile $\Pi\in\mathcal{R}_q$, $t(q)\cap \Pi \neq \emptyset$. 
\end{itemize}

\begin{lemma}
Let $(W_A,W_P)$ be the partition of $Q$
between states winning for Automaton or Pathfinder
in the jumping game.
Then there exists winning witnesses $(W_A,s)$ and
$(W_P,t)$ for Automaton and Pathfinder.
\end{lemma}
\begin{proof}
We start with the direct implication.
In a parity game, a play consistent with a winning strategy never exits the set of winning vertices, thus Automaton and Pathfinder have positional winning strategies
 $\sigma: W_A \to 2^{W_A\times W_A}$
and $\tau : 2^{W_P\times W_P} \to W_P$.
Then $(W_A,\sigma)$ is a winning witness for Automaton:
property $(\alpha)$ holds because $\sigma$ is winning and property $(\beta)$ holds because, by definition of the jumping game, $\forall q\in W_A,\sigma(q)\in\mathcal{R}_q$.
The winning witness $(W_P,t)$ for Pathfinder is defined by
\[
t(q)=\{ (q',m) \mid \exists \Pi \in \mathcal{R}_q,
 \Pi\stackrel m \to q' \text{ is consistent with $\tau$}\}\enspace.
\]
Then $(W_P,t)$ is a winning witness for Pathfinder:
property $(\gamma)$ holds because $\tau$ is winning and property $(\delta)$ holds by definition of $t$.

We show the converse implication.
Let $(W_A,s)$ be a winning witness for Automaton. Then according to $(\beta)$,
for every $q\in W_A$ there exists 
a profile $\sigma(q)\in  \mathcal{R}_q$ such that $\sigma(q)\subseteq s(q)$.
Then $\sigma$ is a strategy in the jumping game
and according to $(\alpha)$ the strategy $\sigma$
is winning on $W_A$.
From  a winning witness $(W_P,t)$ for Pathfinder
we extract a positional strategy $\tau$ winning on $W_P$.
Let $q\in W_P$ 
and $\Pi$ a profile in $\mathcal{R}_q$.
According to $(\delta)$
there exists 
$(q,m) \in \Pi \cap t(q)$,
and $\tau$ plays the move
$\Pi \stackrel m \to q$ .
Then $\tau$ is winning
according to $(\gamma)$.
\end{proof}

Now we show how to check in polynomial time whether a pair $(W,s)$ is a winning witness for Automaton or Pathfinder.
Checking properties $(\alpha)$ or $(\gamma)$ consists in solving a one-player parity game which can be done in polynomial time.

To check properties $(\beta)$ or $(\delta)$,
we modify the automaton to store in its state space the maximal state of the strict ancestors of the current node.
The new state space is $Q\times (\{\bot\}\cup Q)$
and for every $m \in \{\bot\}\cup Q$,
every transition $q \to_a (q_0,q_1)$ in the original
automaton gives rise to a transition
$(q,m) \to_a ((q_0,m'),(q_1,m'))$
in the modified automaton
with $m'=q$ if $m=\bot$ and $m' = \max \{ m,q \}$ otherwise.
This extra component has no incidence on the acceptance condition. This transformation guarantees
that
for every state $q\in Q$ and every subset $\Pi \subseteq Q\times Q$,
\begin{itemize}
\item[($\star$)]
$\Pi \in \mathcal{R}_q$ if and only if
the modified automaton has a
$\set{\Qone,\Qnonzero}$-accepting run $\rho$ with root state $(q,\bot)$  and $\Pi$ is the set of states appearing on non-root nodes of $\rho$.
\end{itemize}
According to ($\star$), property $(\beta)$ is equivalent to checking that
for every $q\in W$,
the modified automaton restricted to states in 
$\{(q,\bot)\}\cup s(q)$ has a $\set{\Qone,\Qnonzero}$-accepting run,
which can be done in polynomial time according to 
Theorem~\ref{theo:probanp}.

And according to ($\star$), property $(\delta)$ is equivalent to checking that
for every $q\in W$,
the modified automaton restricted to states in 
$\{(q,\bot)\}\cup Q\times Q \setminus s(q)$ has no $\set{\Qone,\Qnonzero}$-accepting run,
which can be done in polynomial time according to 
Theorem~\ref{theo:probanp}.
\end{proof}

\paragraph*{Example: the everywhere positive language}
A tree $t$ on the alphabet $\set{a,b}$ is  \emph{everywhere positive} if
for
every node $v$,
\begin{enumerate}
\item there is positive probability to see only the letter $t(v)$ below $v$,
\item there is positive probability to see finitely many times the letter $t(v)$ below $v$.
\end{enumerate}

This language is non-empty and contains no regular tree.
The language of everywhere positive trees with root state $a$ is recognized by a nonzero automaton
with six states 
\[
\set{s_b < s_a  < n_b < n_a  < f_b < f_a}\enspace.
\]
On a node labelled by letter $a$,
the automaton can perform a transition
from any of the three states
$\set{s_b,n_b,f_a}$,
meaning intuitively "searching for $b$",
"not searching for $b$" and "just found $a$".
From these states the automaton can choose any pair of successor states which intersects $\set{s_b,f_b}$.
Transitions
on letter $b$ are symmetrical.
The acceptance condition is:
 \begin{align*}
 	\Qall = \set{n_a,n_b,f_a,f_b} \qquad \Qone = \Qall \qquad \Qnonzero = \{na,sa,nb,sb\}\enspace.
 \end{align*}
Due to space constraints,
we can not provide a full description of the jumping game
(see the appendix for more details).
Automaton can win by playing only the moves
$
s_a/n_a\to\set{(f_a,f_a),(n_b,f_a), (s_b,f_a), (n_a,n_a), (s_a,n_a)} 
$
and
$
f_a\to\set{(n_b,f_a), (s_b,f_a)} 
$
and their symmetric counterparts from states $\set{s_b,n_b,f_b}$.
This forces Pathfinder to take only edges labelled by
the states $\set{f_a,n_a,f_b,n_b}$.
These states dominate the states $\set{s_a,s_b}$
thus the limsup of the corresponding plays is in $\Qall$
and
this is a winning strategy for Automaton.

\section*{Conclusion}
We have shown that the emptiness problem for zero and nonzero automata is decidable and in \npconp.
As a consequence, the satisfiability for the logic {\sc MSO +} $\zero$ from~\cite{thinzero} is decidable (in non-elementary time), when $\zero$ is the unary predicate that checks a set of branches has probability $0$.

As shown by Stockmeyer, the satisfiability problem for first-order logic on finite words cannot be solved in elementary time. Therefore any translation from a logic stronger than first-order logic on finite words (such as  {\sc tmso}+$\zero$ on infinite trees) to an automaton model with elementary emptiness (such as nonzero automata) is necessarily nonelementary. This does not make the relatively low \npconp\ complexity of nonzero automata any less interesting. One can imagine other logics than TMSO+zero, either less expressive or maybe even equally expressive but less succint, which will have a relatively low complexity by virtue of a translation into nonzero automata. One natural direction is the study of temporal logics.

%
\subparagraph*{Acknowledgments}

We thank Paulin Fournier, Henryk Michalewski and Matteo Mio for  helpful discussions. 

\bibliography{bib}

\appendix
\newpage
\section*{Appendix}

\section*{Proof of Lemma~\ref{lem:strong}}

\begin{proof}
Clearly every strongly zero accepting run is also zero accepting.

Conversely,
assume a run $r$ is zero accepting, then we show it is strongly zero accepting. Let $v$ be a node labelled by a seed state.
Among all descendant nodes $z$ of $v$, including $v$ itself, 
such that the path from $v$ to $z$ is seed-consistent and $z$ is labelled by a seed state,
 choose any $z$ such that the seed state labelling $z$ is minimal.

For every node $w$ let
$Z_w$ denote the set of branches which visit $w$ and afterwards see only states $\leq  r(w)$ and have limsup $ r(w)$.

We first show that there exists a strict descendant $w$ of $z$ such that 
\begin{itemize}
\item[a)]
$ r(w)\in \Qnonzero$,
\item[b)]
the path from $z$ to $w$ is labelled by states $\leq  r(z)$ and
\item[c)]
$Z_{w}$ has nonzero probability. 
\end{itemize}
Since $r$ is zero accepting and $z$ is labelled by a seed state, there is at least one descendant node $w'$ of $z$, labelled by a state in $\Qnonzero$, such that the path from $z$ to $w'$ is labelled by states $\leq  r(z)$ and $Z_{w'}$ has nonzero probability.
If $w'$ is a strict descendant of $z$ then we set $w=w'$.
Otherwise we choose $w$ as a strict descendant of $w'$, as follows.
Denote $W$ the set of strict descendants of $w'$ which are labelled by $ r(w')$ and the path from $w'$ to $w$ is labelled by states $\leq  r(z)$.
Then $Z_{w'} = \bigcup_{w\in W} Z_{w}$ thus by 
$\sigma$-additivity there exists
a strict descendant $w$ of $w'$ such that  $Z_{w}$ also has non-zero probability.

To establish that the strongly zero accepting condition is satisfied for $v$, we choose a witness $w$ satisfying properties a) b) and c) and we prove two other properties of $w$:
\begin{itemize}
\item[d)] the path from $v$ to $w$ is seed-consistent,
\item[e)] the only seed state that may be visited below $w$ by a branch in $Z_w$ is $ r(w)$ itself.
\end{itemize}

Property d) holds because both paths from $v$ to $z$ and from $z$ to $w$ are seed-consistent and the concatenation of two seed-consistent pathes on a $\Qseed$-labelled node is itself a seed-consistent path. The path from $v$ to $z$ is seed-consistent by choice of $z$. By hypothesis the path from $z$ to $w$ is labelled by states $\leq  r(z)$ and by minimality of $ r(z)$ it does not meet any other seed state than $ r(z)$ thus it is seed consistent.

Property e) holds for a similar reason: if a branch in $Z_w$ visit a descendant $z'$ of $w$ such that $ r(z')\in \Qseed$
then by definition of $Z_w$, $ r(z')\leq  r(w)$.
Since $ r(w)\leq  r(z)$, the path from $z$ to $z'$ is labelled by states $\leq  r(z)$ and the minimality of $ r(z)$ it implies $ r(z)\leq  r(z')$ thus finally $ r(z') =  r(w) =  r(z)$.
\end{proof}

\section*{Proof of Lemma~\ref{lem:projnonzero}}

\begin{proof}
By hypothesis $\Pi_1:(R,\preceq)\to (Q,\leq)$ is monotonic,
thus if $b$ is a branch of the infinite binary tree then its limsup in $r$ is the projection of its limsup in $d$.

Since $\Qnonzero \subseteq \Qone \subseteq \Qall$ 
then the projection of $\Rone$ is $\Qone$ and the projection of $\Rall$ is $\Qall$ thus $r$ is both almost-surely and surely accepting.

We show that $r$ is zero accepting.
Let $v$ a node such that $r(v)$ is a seed state.

For a start, we show that there is a node $w$ below $v$
such that the path from $v$ to $w$ is seed-consistent in $r$
(thus in particular $r(w) \leq r(v)$) and $d(w)$ is the subtree-guessing state $(r(w),r(w),*)$.
There are three cases, depending whether $d(v)$ is a subtree-guessing, path-finding or normal state. If $d(v)$ is a subtree guessing state then according to the definition of $R$,
since $r(v)\in\Qseed$ then 
$d(v)=(r(v),r(v),*)$ and we set $w=v$.
If $d(v)$ is a path-finding state
then by design 
the automaton follows in either direction a  path seed-consistent in $r$ as long as it does not enter a subtree-guessing state $(r(w),r(w),*)$. Since there is no path-finding state in $\Rall$, for sure the automaton eventually enters such a state,
otherwise $d$ would not be accepting.
If $d(v)$ is a normal state then according to the transition table either the left or right child $w'\in\{v0,v1\}$ of $v$ is in the path-finding state $(r(w'),r(v))$ or the subtree-guessing state $(r(w'),r(w'),*)$. In both cases $r(w')\leq r(v)$. In the subtree-guessing case we set $w=w'$ and we are done. In the pathfinding case, from $w'$ the automaton follows a path 
seed-consistent in $r$ until it eventually enters the subtree-guessing state $(r(w),r(w),*)$. By design of the transition table all states on the path from $w'$ to $w$ are $\leq r(v)$ thus the path from $v$ to $w$ is seed-consistent in $r$.
 
Since $(r(w),r(w),*)\in\Rnonzero$, the nonzero condition ensures that there is nonzero probability to continue the run $r$ below $w$ in the set of states $\Rnonzero$.
According to the transition table, in this case the states below $w$ are labelled by $\{ q\in Q \mid q\leq r(w)\} \times \{r(w)\}\times \{*\}$. Since $d$ is almost-surely accepting then by definition of $\Rone$, almost-surely the limsup of such a path is $(r(w),r(w),*)$.
Since $r(w)\in\Qnonzero$ then the nonzero condition holds in $v$, with witness $w$.
\end{proof}

\section*{Extended example: the everywhere positive language}
A tree $t$ on the alphabet $\set{a,b}$ is  \emph{everywhere positive} if
for
every node $v$,
\begin{enumerate}
\item there is positive probability to see only the letter $t(v)$ below $v$,
\item there is positive probability to see finitely many times the letter $t(v)$ below $v$.
\end{enumerate}

This language is non-empty and contains no regular tree.
The language of everywhere positive trees with root state $a$ is recognized by a nonzero automaton
with six states 
\[
\set{s_b < s_a  < n_b < n_a  < f_b < f_a}\enspace.
\]
On a node labelled by letter $a$,
the automaton can perform a transition
from any of the three states
$\set{s_b,n_b,f_a}$,
meaning intuitively "searching for $b$",
"not searching for $b$" and "just found $a$".
From these states the automaton can choose any pair of successor states which intersects $\set{s_b,f_b}$.
Transitions
on letter $b$ are symmetrical.
The acceptance condition is:
 \begin{align*}
 	\Qall = \set{n_a,n_b,f_a,f_b} \qquad \Qone = \Qall \qquad \Qnonzero = \{na,sa,nb,sb\}\enspace.
 \end{align*}

We do not provide a full description of the jumping game
but we provide a few examples of moves available to player Automaton,
as well as a positional winning strategy for player Automaton.

Among the simplest moves of Automaton in the jumping game are the two moves
\begin{align*}
&n_b \to \{(n_b,n_b)(s_b,n_b)\}
\\
&s_b \to \{(n_b,n_b)(s_b,n_b)\}\enspace.
\end{align*}
These moves are legal because they are the profiles of
the following $\set{\Qone,\Qnonzero}$-accepting runs.
Both runs are on the tree whose all nodes have letter $a$
and everywhere in the tree the automaton applies the same 
two transitions
$n_b \to_b (n_b,s_b)$
and
$s_b \to_b (n_b,s_b)$.
In other words, the automaton always looks for a letter $b$ in the right direction
(state $s_b$),
and does not look for $b$ in the left direction (state $n_b$).
Since the tree has no $b$  then the quest for a letter $b$ is hopeless,
and on are branches of the run that ultimately always turn right
(i.e. branches in $\{0,1\}^*1^\omega$),
the automaton ultimately stays in state $s_b$ and the branch has limsup $s_b$,
which is neither in $\Qall$ nor in $\Qone$.
But such branches happen with probability zero:
almost-every branch makes infinitely many turns left and right
and has limsup $n_b$,
thus the run is almost-surely accepting:
This run is nonzero-accepting as well because every node labelled by $\Qnonzero$
 has all its descendants labelled by $\Qnonzero$.
 
 Yet legal, these two moves are not good options for Automaton in the jumping game
 because then Pathfinder can generate the play
 \[
 s_b \stackrel {s_b} \to \set{(n_b,n_b)(s_b,n_b)} \stackrel {n_b} \to s_b \stackrel {s_b} \to \set{(n_b,n_b)(s_b,n_b)}\stackrel {s_b} \to s_b \stackrel {s_b} \to \ldots
 \]
 which has limsup $n_b=\max\{s_b,n_b\}$ and is losing for Automaton since $n_b\not\in\Qall$.

\medskip

Automaton should use more elaborate moves in order to win the jumping game,
in particular the three moves
\begin{align}
\label{eq:moves}
&s_a/n_a\to\set{(f_a,f_a),(n_b,f_a), (s_b,f_a), (n_a,n_a), (s_a,n_a)}\\
\label{eq:moves2}
&f_a\to\set{(n_b,f_a), (s_b,f_a)} 
\end{align}
are interesting.
Before explaining which these are legal moves,
remark that these three moves 
and their symmetric counterparts from states $\set{s_b,n_b,f_b}$
ensure the victory to Automaton,
because they force
Pathfinder to take edges labelled by
the states $\set{f_a,n_a,f_b,n_b}$.
These four states dominate the states $\set{s_a,s_b}$
and belong to $\Qall$
thus the limsup of the corresponding plays are in $\Qall$,
which ensures a win to Automaton.

We show that~\eqref{eq:moves} and~\eqref{eq:moves2} 
are legal moves for Automaton in the jumping game,
by providing  positional runs of the extended automaton
which generate the profiles
$\set{(n_b,f_a), (s_b,f_a)}$
and
$\set{(f_a,f_a),(n_b,f_a), (s_b,f_a), (n_a,n_a), (s_a,n_a)}$.
%
%

We start with a brief description of the extended automaton.
To save  space, we write $s_*$ for the pair $\set{s_a,s_b}$
and use a similar convention for $n_*$ and $f_*$ as well.
With this convention, the states 
are
\[
\set{s_*,n_*,f_*}\times\set{\bot, s_*,n_*,f_*}\enspace.
\]
On the  first component,
the transitions of the extended automaton are identical to the
transitions of the original automaton.
The second component is used to store 
the largest state seen so far.
It is initialized to $\bot$
and then updated with the maximum of itself and the origin state
of the transition.

We give three examples of transitions of the extended automaton
\begin{itemize}
\item
The automaton starts the computation looking for an $a$ and keeps looking for an $a$ on the left direction:
\[
(s_a,\bot)\to_b (s_a,s_a)(n_a,s_a)\enspace.
\]
\item
The automaton is not looking for an $a$
but it finds an $a$ in the left child and keeps looking for an $a$
in the right direction:
\[
(n_a,s_a)\to_b (f_a,n_a)(s_a,n_a)\enspace.
\]
\item
The automaton has already found $b$ in the past,
it is right now looking for an $a$,
and finds one $a$ in both direction:
\[
(s_a,f_b)\to_b (f_a,f_b)(f_a,f_b)\enspace.
\]
\end{itemize}
This last transition is a killer for the nonzero condition,
because $s_a\in \Qnonzero$ but $f_a\not \in \Qnonzero$.
Using this transition falsifies
the condition "there is positive probability to see only the letter $b$ below $v$"
is not satisfied.
Actually this transition could be removed from the set of transitions without changing 
the set of accepting runs.

To prove that the move
$f_a\to\set{(n_b,f_a), (s_b,f_a)}$
is valid, we consider the run on a tree whose all nodes are labelled by $a$.
The extended automaton first find an $a$ in the root, in state $(f_a,\bot)$
and then looks hopelessly for a $b$ in the right direction
using the transitions
\begin{align*}
& (f_a,\bot) \to_a (n_b,f_a)(s_b,f_a)\\
& (nb/s_b,f_a) \to_a (n_b,f_a)(s_b,f_a)\enspace.
\end{align*}
 This run is almost-surely
 accepting
  because
every branch which takes infinitely many turns left
has limsup $(n_b,f_a)$,
and this is almost-every branch.
 This run is nonzero-accepting because every node labelled by $\Qnonzero$
 has all its descendants labelled by $\Qnonzero$.
 
 \medskip
 
To prove that the move
 $
s_a\to\set{(f_a,f_a),(n_b,f_a), (s_b,f_a), (n_a,n_a), (s_a,n_a)} 
$
is legal, consider 
a tree whose root is labelled by $b$,
all the nodes in the left subtree are labelled by $b$ as well
while all the nodes in the left subtree are labelled by $a$.
The extended automaton starts on state $(s_a,\bot)$ in the root.
In the right subtree the automaton finds $b$ on the right child of the root
(i.e. node $1$)
and then looks hopelessly for $a$ in the right direction
using transitions
$n_a/s_a \to_a (n_a,s_a)$
(dual to the previous case
$f_a\to\set{(n_b,f_a), (s_b,f_a)}$).
In the left subtree
the automaton 
looks hopelessly for $b$ in the right direction
using transitions
$n_b/s_b \to_b (n_b,s_b)$.

The transitions of this positional run are
\begin{align*}
& (s_a,\bot) \to_a (f_a,s_a)(n_a,s_a) \text{ (used once in the root) }\\
& (n_a,s_a) \to_a (n_a,n_a)(s_a,n_a)  \text{ (used once in the right subtree) }\\
& (n_a/s_a,n_a) \to_a (n_a,n_a)(s_a,n_a)  \text{ (used $\infty$ often in the right subtree) }\\
& (f_a,s_a) \to_a (n_b,f_a)(s_b,f_a)  \text{ (used once in the left subtree) }\\
& (n_b/s_b,f_a) \to_a (n_b,f_a)(s_b,f_a)  \text{ (used $\infty$ often in the left subtree) }\enspace.
\end{align*}
This run is almost-surely accepting for the same reasons than in the previous case.
It is nonzero accepting because from the root node,
whose state $(s_a,\bot)$ is in $\Qnonzero$
there is probability $\frac{1}{2}$ to continue in the right subtree
where all states are in $\Qnonzero$.
And every non-root node labelled by $\Qnonzero$
 has all its descendants labelled by $\Qnonzero$.

\medskip

The positional run 
for the move
$
n_a\to\set{(f_a,f_a),(n_b,f_a), (s_b,f_a), (n_a,n_a), (s_a,n_a)} 
$
is almost the same than for the move
$
s_a\to\set{(f_a,f_a),(n_b,f_a), (s_b,f_a), (n_a,n_a), (s_a,n_a)} 
$
except the root has state $(n_a,\bot)$.
The transitions of this positional run are
\begin{align*}
& (n_a,\bot) \to_a (f_a,n_a)(n_a,n_a) \text{ (used once in the root) }\\
& (f_a,n_a) \to_a (n_b,f_a)(s_b,f_a)  \text{ (used once in the left subtree) }\\
& (n_b/s_b,f_a) \to_a (n_b,f_a)(s_b,f_a)  \text{ (used $\infty$ often in the left subtree) }\\
& (n_a/s_a,n_a) \to_a (n_a,n_a)(s_a,n_a)  \text{ (used $\infty$ often in the right subtree) }\enspace.
\end{align*}
It is $\set{\Qone,\Qnonzero}$-accepting for the same reasons than in the previous case.

\end{document}